\newcounter{enumi_val}
\title{On polynomial grammars extended with substitution}
\author{Janusz Schmude}{Institute of Informatics,\\ University of
	Warsaw, Poland}{jschmude@mimuw.edu.pl}{}{}
\authorrunning{J. Schmude}
\keywords{equivalence problem, register transducers, word substitution, polynomial grammar}
\begin{document}

	\maketitle
	
	\begin{abstract}
We investigate decidability of equivalence of register transducers, also called copyful Streaming String Transducers in case of string input, extended with an operation of substituting a register for all occurrences of a given letter in another register. We reduce to zeroness of polynomial grammars (over ring of polynomials) extended with analogous substitution operation by encoding strings into polynomials; a similar method was used successfully by Seidl et al. in 2018.
We give two restrictions under which register transducers with substitution have decidable equivalence. They seem to be very restrictive but on the other hand, they seem to be on the edge of the scope of this ``polynomial'' method, as in the third result we give a rather restricted model of polynomial grammars with substitution that has undecidable equivalence.
	\end{abstract}
	\section{Introduction}
	Let us consider the following computation device that computes string-to-string functions: it has two registers $R, S$ that store strings and reads input words from left to right. It initializes the registers to empty words and if the read letter is $a$, it updates the registers by putting simultaneously
$$
\begin{cases}
	R := a\cdot R,\\ 
	 S := R \cdot a.
\end{cases}
$$
After input word is read, it outputs $R\cdot S$. For example, a run over input word $w=abb$ is 
$$
\!\!
\begin{tabular}{c}$R = \varepsilon$\\ $S = \varepsilon $\end{tabular}\!\!\xra{a}\!\! 
\begin{tabular}{c}$R = a$\\ $ S = a $\end{tabular}\!\!\xra{b} \!\!
\begin{tabular}{c}$R = ba$\\ $ S = ab$\end{tabular}\!\!\xra{b} \!\!
\begin{tabular}{c}$R = bba$\\ $ S = abb $\end{tabular}\!\!\ra
 \text{output }  bbaabb.
  $$
This device computes function $w \mapsto \rev(w) \cdot w$, where $\rev$ is the reverse function. 
Such a device is an instance of a \emph{register transducer} \cite{BojSiglog2019} (in general, register transducers can compute tree-to-string functions, however in all the examples we give string-to-string functions, which are a special case where input trees are monadic). Sometimes two register transducers defined in two different ways compute the same function -- the \emph{equivalence problem} asks, given two register transducers, if this is the case. For example, the register transducers that compute, respectively, the reverse and identity function are equivalent if and only if the alphabet is unary.

There are several methods of deciding equivalence of register transducers e.g. by use of an equivalent model of two-way transducers \cite{Gurari82} or by reduction to non-emptiness of one counter automata in case of \emph{copyless} register transducers, which are called \emph{Streaming String Transducers} \cite{Alur11}.
Method that is close to one we use in this paper is by reduction to \emph{zeroness problem} for \emph{polynomial grammars} \cite{BojSiglog2019}. 

Let us describe a variant of polynomial grammars that uses a free monoid instead of a ring and explain how it is related to equivalence problem of register transducers.
It is a generalisation of context-free grammar, in which each nonterminal is associated with some dimension $k$, and therefore outputs $k$-tuples, and right-hand sides of production rules may use function symbols to avoid introducing independent copies of a nonterminal. For example, consider a \pg over $(\Sigma^*, \cdot)$ with one nonterminal $X$ of dimension 2 and production rules 
 \begin{align*}
 	X \ra  \ p_a(X), & \ a \in \Sigma \mid (\varepsilon, \varepsilon), \text{where} \\
 	p_a:(\Sigma^*)^2 \ra& (\Sigma^*)^2 \\
 	p_a(w_1, w_2) =& \ (aw_1, w_2a) \text{ for } a \in \Sigma, w_1, w_2 \in \Sigma^*. 
 \end{align*}
By definition, its language consists of pairs $p_{a_n}(p_{a_{n-1}}(\ldots (p_{a_2}(p_{a_1}(\varepsilon, \varepsilon)))\ldots))$ for $a_1, \ldots, a_n \in \Sigma$. We define the \emph{grammar equivalence problem}, in the case the initial nonterminal is 2-dimensional, as follows: is the first coordinate equal to the second coordinate for every produced pair? Observe that equivalence holds for the mentioned grammar if and only if the register transducers that compute, respectively, reverse and identity function, are equivalent.

Register transducers considered in this paper are enriched with a substitution operation, we call them \emph{register transducers with substitution}. For example, consider the function $\sqrev:\Sigma^* \ra (\Sigma \cup \{\#\})^*$, called \emph{squared reverse} (the definition and name are motivated by \emph{iterated reverse} \cite{BojPolyregularArXiv2018}), defined as follows:
$$
	\sqrev : w \mapsto (\# \cdot \rev(w))^{|w|}.
$$
It can be computed by two following register transducers with substitution that initialize their registers on empty strings, output register $R$, and have register updates as follows: 
$$
\begin{cases}
R := \# \cdot a \cdot S \cdot (R[\# := \# \cdot a]), \\
S := a\cdot S,
\end{cases}
\text{ for } a \in \Sigma
$$ 
and 
$$
\begin{cases}
	R :=  (R[\# := \# \cdot a]) \cdot \# \cdot a \cdot S , \\
	S := a \cdot S,
\end{cases}
\text{ for } a \in \Sigma.
$$ 
To decide their equivalence, one may consider a \emph{polynomial grammar with substitution} (again, over free monoid instead of a ring) with initial nonterminal $Y$ of dimension 4: 
$Y \ra q_a(Y), a \in \Sigma \mid (\varepsilon, \varepsilon, \varepsilon, \varepsilon)$, where $q_a(w_1, w_2, w_3, w_4) = 
(\# \cdot a\cdot w_2\cdot(w_1[\# := \# \cdot a]), a\cdot w_2, (w_3[\# := \# \cdot a])\cdot\# \cdot a\cdot w_4, a\cdot w_4)$ and test for equivalence the grammar $S_1 \ra \pi_{[1,3]}(Y)$, where $\pi_I$ is projection on coordinates from $I$.

We are interested in decidability of grammar equivalence for such grammars.

We use the Hilbert Method \cite{BojSiglog2019}, which relies on encoding objects manipulated by transducers, like strings or trees, into rings e.g. of integers or of polynomials; this is done in order to use algebraic geometry machinery. This method has been used to prove decidability of equivalence for register transducers without substitution, where strings were encoded into ring of integers \cite{SeidlEtAl2018} (see also \cite{BojSiglog2019} for a slightly different encoding).
In this paper we use an analogous encoding (Definition \ref{df:str-to-pol-encoding}), this time into ring of polynomials, in order to cover substitution, thus reducing problem to equivalence of \emph{\pgsubss} over a ring of polynomials. 
Equivalence is known to be undecidable for this model, even in a rather restricted special case \cite[Theorem 20]{BoiretEtAl2018}. In this paper we prove two decidability results for other special cases and strengthen this undecidability result, clarifying the scope of this ``polynomial'' approach.

Obtained decidability results transfer immediately to corresponding register transducers. This is stated in Lemma \ref{lm:RTsubss-can-be-simulated-by-pgsubss} for unrestricted \pgsubss, which is a reduction to an undecidable problem, however, as explained in Remark \ref{rm:reduction-to-an-undecidable-problem}, reduction holds also for special cases, which are decidable (Theorem \ref{thm:modeltwo-decidable} and Lemma \ref{lm:pgaut-decidable}).

\partofsec{Main results.}
The main problem discussed in this paper is \emph{zeroness} of \pgsubss over a ring of polynomials (Definition \ref{df:zeroness}).
First we discuss such grammars with an additional fixed bound on the number of transitions per derivation that use substitution. We provide a positive result (Theorem \ref{thm:modeltwo-decidable}) and a negative one (Theorem \ref{thm:modeltwoundecidable-undecidable}).
The negative result is a strengthening of  \cite[Theorem 20]{BoiretEtAl2018}, mentioned in the previous paragraph. 
Then we move to grammars that use a special case of substitution -- evaluation. For this variant we give a positive result (Lemma \ref{lm:pgaut-decidable}) and describe how it transfers to register transducers with substitution (Theorem \ref{thm:modelone-decidable}).

%

	\section{Preliminaries}\label{sec:prelim}
	\partofsec{Polynomial functions.}
By a \emph{ring} we mean a commutative ring with unity. A ring has \emph{no zero divisors} if for all non-zero $a,b\in R$, their multiplication $a\cdot b$ is non-zero. Let $R$ be a ring, for example the ring of integers. Following \cite{BojSiglog2019}, it is convenient to define polynomial functions so that they can return tuples. Tuples inputted by polynomials will be also called \emph{vectors}. 
A function $$p:R^n \ra R^k$$ for some $n, k \geq 0$ is a \emph{polynomial function} if every output coordinate is represented by a polynomial in $n$ variables. For a set $V \subseteq R^n$, by $p\restr{V}$ we denote function $p$ restricted to $V$.

\partofsec{Substitution.}
In the ring $R[X]$, substitution is a $(|X|+1)$ - ary operation defined as 
\begin{align*}
	\subs&: R[X] \times R[X]^{X} \ra R[X]\\
	\subs& (p, (q_x)_{x \in X}) = p[x:=(q_x)_{x \in X}].
\end{align*}
We consider a vectorized variant of substitution:
\begin{align*}
\subs^{Y}_{X}&: R[X]^{Y} \times R[X]^{X} \ra R[X]^{Y}\\
\subs^{Y}_{X}&((p_y)_{y \in Y}, (q_x)_{x \in X}) = (p_y(x:=q_x, x\in X))_{y \in Y}.
\end{align*}
If the arities are clear from the context,  we simply write
$
\subs(\w p, \w q),
$
or even
$$
\w p (\w q),
$$
where $\w p = (p_y)_{y \in Y}$ and $\w s = (q_x)_{x \in X}$ are tuples of polynomials in $R[X]$.

\partofsec{\PGs.}
We define polynomial grammars following \cite[Definition 2.1]{BojSiglog2019}. As opposed to the Introduction, we define them only for rings. Most of the times it will be the ring of polynomials, although in the proof of Theorem \ref{thm:modeltwo-decidable} we also use other rings.
\begin{definition}\label{df:pol-grammar}
A \emph{polynomial grammar} over ring $R$ consists of
\begin{itemize}
	\item finite set of \emph{nonterminals} with a distinguished \emph{initial nonterminal}, each nonterminal with an associated \emph{dimension} $\in \{0, 1, 2, \ldots\}$, and
	\item finite set of \emph{production rules}, each of form $X \ra p(Y_1, \ldots, Y_k)$ for $k=0,1,\ldots$ where $p: R^n \ra R^m$ is a polynomial function, $Y_i$ for $i=1,\ldots,k$ are nonterminals, and input and output dimensions of expressions ``match'', i.e. sum of dimensions of $Y_i$'s, for $i=1\ldots l$, is equal to $n$, and dimension of $X$ is equal to $m$.

\end{itemize}
If a nonterminal has dimension $n$, then it generates a subset of $R^n$, which is defined as follows by induction. (The language generated by the grammar is defined to be the subset generated by its initial nonterminal.) Suppose that
$$
X \ra p(Y_1,...,Y_k)
$$
is a production and we already know that vectors $v_1 , \ldots , v_k$ are generated by nonterminals $Y_1,...,Y_k$ respectively. Then the vector $p(v_1 ,\ldots , v_k)$ is generated by nonterminal $X$. The induction base is the special case of $k = 0$, where the polynomial $p$ is a constant.

\begin{ex}[see Fig. 1]
	Polynomial grammar $$A\ra p(A) \mid 2$$ with $p(a) = a \cdot a$ has language $$\{2^{2^n}: n \in \Z_+\}.$$
	A similar polynomial grammar, $$A \ra A \cdot A \mid 2$$ has language $$\{2^n: n \in \Z_+\}.$$
\end{ex}

We define the \emph{dimension} of a grammar to be the dimension of its initial nonterminal. 
For simplicity, the language of a grammar $G$ with initial nonterminal $A$ is sometimes denoted by $A$.
For a polynomial function $f:R^n \ra R^m$, for a grammar $G$ with initial nonterminal $A$ of dimension $n$, by $f(G)$ or $f(A)$ we denote a ``canonical'' grammar with language $f(A)$, i.e. grammar build from $G$ by adding a fresh initial nonterminal $S$ and a production $S \ra f(A)$.
\end{definition}
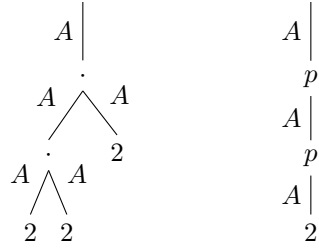
\begin{figure} \label{fig:ex-derivation}
	\centering
	\begin{tikzpicture}
		\Tree[ \edge node[auto=right]{$A$};
			[.$\cdot$
			\edge node[auto=right]{$A$};
				[.$\cdot$
				\edge node[auto=right]{$A$};
					[.2
					]
				\edge node[auto=left]{$A$};
					[.2
					]
				]	
			\edge node[auto=left]{$A$};
				[.2
				]
			]	
		]
		\begin{scope}[xshift=3cm]
			\Tree[
			\edge node[auto=right]{$A$};
			[.$p$ 
			\edge node[auto=right]{$A$};
			[.$p$
			\edge node[auto=right]{$A$};
			[.2
			]
			]
			]
			]
		\end{scope}
	\end{tikzpicture}
\caption{Example derivation trees for grammars $A\ra p(A) \mid 2$ with $p(a) = a \cdot a$ and $A \ra A \cdot A \mid 2$. They produce $8 = 2^3$ and, respectively, $16=2^{2^2}$.}
\end{figure}

\partofsec{\PGsubss.}
Let $R$ be a ring. A \emph{\pgsubs} over $R[X]$ is an extension of \pg over $R[X]$ where polynomials in production rules may use substitution (additionally to $+$ and $\cdot$).
\begin{definition}\label{df:zeroness}
	For a \pg or a \pgsubs $A$ over a ring of polynomials, that has a non-empty language
	\footnote{A technical assumption, does not change the essence of the problem. Notice it can be checked easily - it is enough to find productive nonterminals e.g. by fixpoint strategy. In particular, the algorithm does not depend on the output algebra.},
	 \emph{zeroness problem} asks if $L(A) = \{0\}$; we often abbreviate it to $A=0$.
\end{definition}

\partofsec{String-to-polynomial encoding.} 
The principal application of zeroness problem is equivalence of register transducers; the latter can be reduced to the former by the encoding we describe now.
\begin{definition}\label{df:str-to-pol-encoding}
Consider the function $\Phi$ that maps strings over $\Sigma$ into pairs of polynomials in variables from set $\wt \Sigma \cup \w \Sigma$, which are two disjoint copies of $\Sigma$, defined as follows:
\begin{align}\label{eq:str-to-pol-enc}
	\Phi&: a_1\ldots a_n \mapsto \left( \sum_{i=1}^n \w{a_1} \cdot \ldots \cdot\w{a_{i-1}} \wt {a_i}, 
	\quad
	 \w{a_1} \cdot \ldots  \w{a_n}\right).
\end{align}
We denote first and second coordinate of  $\Phi(w)$ by $\wt w$ and $\w w$, i.e.
\begin{equation}
	\Phi(w) = (\wt w, \w w).
\end{equation}
For example, $\Phi(abbab) = (\wt a {\w b}^2 {\w a} + \wt b {\w b}^2 {\w a} + \wt b {\w b}{\w a} + \wt a{\w b} + \wt b, {\w a}^2 {\w b}^3)$.

The function $\Phi$ is a ``generalisation\footnote{It is a generalisation in a sense, that natural encoding is an evaluation of $\Phi$ at $\wt a = \tau(a), \w a = |\Sigma|$ for a bijection $\tau$ of $\Sigma$ with $\{0, 1, \ldots, |\Sigma|-1\}$}'' of a natural encoding of strings into numbers they represent in $|\Sigma|$ -ary, i.e. $w \mapsto (\val_{|\Sigma|}(w), |\Sigma|^{|w|})$ where $\val_{|\Sigma|}(w)$ is the value of $w$ as a number in $|\Sigma|$-ary (assuming some bijection of $\Sigma$ with $\{0, 1, \ldots, |\Sigma|-1\}$). We use this generalisation to be able to use substitution.

Notice that $\Phi$ ``commutes'' with substitution , i.e.  
$$ 
\Phi(w[a := v_{a}, a \in \Sigma]) =
\Phi(w)[\wt a := \wt{v_{a}}, \w a  := \w {v_{a}}, a \in \Sigma] \text{ for } w, (v_a)_{a \in \Sigma} \in \Sigma^*.
$$
We say that the polynomial substitution $(\wt a := \wt{v_a}, \w a := \w{v_a})_{a \in \Sigma}$ is \emph{induced by} the word substitution $(a := v_{a})_{a \in \Sigma}$.
\end{definition}
Using the encoding above, we obtain the following lemma.
\begin{lemma}\label{lm:RTsubss-can-be-simulated-by-pgsubss}
	Equivalence of register transducers with substitution over $\Sigma^*$ can be reduced to zeroness of polynomial grammars with substitution over $\QSigR$.
\end{lemma}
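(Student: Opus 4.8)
The plan is to turn each register transducer with substitution into a polynomial grammar with substitution by replacing every string stored in a register with its $\Phi$-encoding, and by replacing every register update with the polynomial-with-substitution operation that $\Phi$ induces on these encodings. Running both input transducers ``in parallel'' on a common input and subtracting their encoded outputs yields a single grammar whose language is $\{0\}$ exactly when the two transducers agree on all inputs. Two facts drive the argument: $\Phi$ translates the two string operations used in register updates---concatenation and substitution---into polynomial operations (with substitution) over $\QSigR$, and $\Phi$ is injective, so equality of output strings is faithfully witnessed by equality of their encodings.

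First I would record how $\Phi$ interacts with concatenation. A direct computation from \eqref{eq:str-to-pol-enc} gives, for all $u,v \in \Sigma^*$,
\[
\Phi(uv) = \bigl(\wt u + \w u \cdot \wt v,\ \w u \cdot \w v\bigr),
\qquad
\Phi(\varepsilon) = (0,1),
\]
so concatenation of strings becomes a fixed polynomial operation on the pairs $(\wt u,\w u)$ and $(\wt v,\w v)$. Combined with the commutation of $\Phi$ with substitution already noted in Definition~\ref{df:str-to-pol-encoding}, and with the encoding of a constant letter $\Phi(a)=(\wt a,\w a)$, this shows that any register update built from the stored registers, the fixed letters of $\Sigma$, concatenation, and substitution is realized, on the encoded coordinates, by a polynomial function with substitution. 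Each register is thus represented by two coordinates, and each update map lifts to such a polynomial-with-substitution map on the doubled coordinate space; this is exactly the passage from the update maps of the running examples to the map $q_a$ of the Introduction.

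Next I would build the grammar. Introduce one nonterminal per pair of control states of the two machines (a single nonterminal if they are stateless, as in the running examples); its dimension is twice the total number of registers of both machines, the two coordinates per register storing the encoding of that register's current content. Reading a letter $a$ becomes a production $X \to p_a(X)$, where $p_a$ applies, blockwise for each machine, the lifted update maps from the previous paragraph; the initial configurations give the constant base productions encoding the initial register contents. A final production then applies each machine's output function (itself a concatenation of registers, hence a polynomial-with-substitution map) and subtracts the two resulting encoded pairs, producing the $2$-dimensional vector $\Phi(T_1(w))-\Phi(T_2(w))$ for the input $w$ labelling the derivation. Calling the resulting grammar $A$, we have by construction $L(A)=\{\,\Phi(T_1(w))-\Phi(T_2(w)) : w \in \Sigma^*\,\}$, so $A=0$ if and only if $\Phi(T_1(w))=\Phi(T_2(w))$ for every input $w$.

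Finally I would establish injectivity of $\Phi$, which converts the last condition into genuine equivalence $T_1\equiv T_2$. One can decode $w=a_1\ldots a_n$ from $\Phi(w)$ by reading the first coordinate $\wt w = \sum_{i=1}^n \w{a_1}\cdots\w{a_{i-1}}\,\wt{a_i}$ in order of increasing degree in the $\w\Sigma$-variables: the unique degree-$0$ term is $\wt{a_1}$, and once $a_1,\ldots,a_k$ are known the unique degree-$k$ term equals $\w{a_1}\cdots\w{a_k}\,\wt{a_{k+1}}$, which determines $a_{k+1}$; hence $\Phi(u)=\Phi(v)$ forces $u=v$. Combining the three steps yields $T_1\equiv T_2 \iff A=0$, the desired reduction. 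The main obstacle is the faithful lifting of the update semantics: one must verify that composing the concatenation formula with the substitution-commutation identity produces a single well-formed polynomial-with-substitution map that respects the dimension bookkeeping of Definition~\ref{df:pol-grammar}. If the transducer model is genuinely tree-to-string, a branching input node induces a production with several nonterminals, and one must ensure the parallel construction assigns a common input tree to both machines; the string case treated above is the special case of monadic inputs.
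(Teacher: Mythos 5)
Your proposal is correct and follows essentially the same route as the paper's proof sketch: encode register contents via $\Phi$, lift concatenation and substitution to polynomial operations with substitution, run both transducers in parallel inside one grammar, and subtract the encoded outputs, with injectivity of $\Phi$ making the encoding faithful. The only cosmetic difference is your concatenation identity $\Phi(uv)=(\wt u + \w u\cdot \wt v,\ \w u\cdot \w v)$, which matches the prefix-product form of the displayed definition of $\Phi$, whereas the paper's proof uses the mirror identity $\wt{wv}=\wt w\cdot \w v+\wt v$ (suffix convention, consistent with the paper's worked example); both conventions yield valid encodings and the argument is unaffected.
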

\begin{remark}[Undecidability in Lemma \ref{lm:RTsubss-can-be-simulated-by-pgsubss}]\label{rm:reduction-to-an-undecidable-problem}
Let us emphasize that even though in Lemma \ref{lm:RTsubss-can-be-simulated-by-pgsubss} we have a reduction to an undecidable problem, it remains valid for special cases of this problem that are decidable -- Theorem \ref{thm:modeltwo-decidable} and Lemma \ref{lm:pgaut-decidable}. As corollaries we obtain, respectively, result from beginning of Section \ref{subsec:interpretation-as-sys-of-eqs-testing} and Theorem \ref{thm:modelone-decidable}.
\end{remark}
\begin{proof}[Proof sketch of Lemma \ref{lm:RTsubss-can-be-simulated-by-pgsubss}]
	Suppose that we have two register transducers with substitution that compute string-to-string functions $f, g: \Sigma^* \ra \Sigma^*$. We can convert them into a polynomial grammar $G$ over $\Q[\wt \Sigma \cup\w \Sigma]$ that produces pairs of polynomials $(\wt{f(w)}, \wt{g(w)})$ for input words $w$. It can be done us using the fact that
	$$
		\Phi(w \cdot v) = p(\Phi(w), \Phi(v)) 
	$$
for some polynomial function $p$ -- indeed, for any words $w, v$ we have $\wt {wv} = \wt w \cdot \w v + \wt v$ and $\w {wv} = \w w \cdot  \w v$ or explicitly 
$$p(x_1, x_2, y_1, y_2) = (x_1y_2 + y_1 , x_2y_2)$$
-- and that  $\Phi$ ``commutes'' with substitution (see Definition \ref{df:str-to-pol-encoding}).
 Applying polynomial function $p(x_1, x_2) = x_1 - x_2$ to $G$ results in grammar $p(G)$ for which zeroness holds if and only if $f, g$ are equivalent.
\end{proof}
	\subsection{Fields, algebraic sets}
	A \emph{field} is a ring in which every non-zero element has an inverse. Let $K$ be a field. By $K[X]$ we denote the ring of polynomials over set of variables $X$. By $K(X)$ we denote the field of rational functions over $X$, i.e. $K(X) = \{\frac{p}{q} \ | \ p, q \in K[X], q\neq 0\}$. 

A field is a \emph{computable field} if its elements can be enumerated such that operations $+, \cdot$ are computable functions.

A field $K$ is \emph{algebraically closed}, if every univariate polynomial with coefficients in $K$ has a root in $K$. An \emph{algebraic closure} of a field $K$, denoted by \barK, is, roughly speaking, the smallest algebraically closed field that contains $K$. Typical examples of algebraically closed fields are (1) the field of complex numbers, which is algebraic closure of field of real numbers and (2) the field of algebraic numbers, which is algebraic closure of field $\Q$.

Introducing algebraic closures does not affect effectiveness:
\begin{lemma}(\cite[Theorem 7]{Rabin68})\label{fact:rabin-alg-clos-computable}
	Let $K$ be a given computable field. Then one can compute an embedding $K \subset \barK$, which moreover is a computable function. In particular, \barK is a computable field and it can be computed, given $K$.
\end{lemma}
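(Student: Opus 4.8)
The plan is to realise \barK as an increasing union $K = F_0 \subseteq F_1 \subseteq \cdots$ of finite algebraic extensions of $K$, each presented as a computable field, arranged so that every polynomial that ever occurs over any $F_n$ eventually has a root adjoined. The single-step operation is the root adjunction $F \rightsquigarrow F[x]/(q)$ for an irreducible $q \in F[x]$: its elements are the polynomials of degree below $\deg q$, addition and multiplication are polynomial arithmetic reduced modulo $q$, and the inverse of a nonzero element is produced by the extended Euclidean algorithm, which runs over any field since $\gcd$ is computable there. Thus a single adjunction preserves computability and decidable equality (equality of normal forms). Fixing a computable enumeration of $K$ and dovetailing over all pairs consisting of a stage and a polynomial named so far, one obtains a stream $p_1, p_2, \dots$ exhausting every polynomial over every $F_n$; adjoining at step $n$ a root of $p_n$ to $F_{n-1}$ and setting $\barK = \bigcup_n F_n$, the operations and equality on \barK are computable because each element lies in some $F_n$ with computable inclusions, and \barK is algebraically closed because the dovetailing guarantees a root for every polynomial over \barK.

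The step I expect to be the main obstacle is precisely the root adjunction, because it asks for an \emph{irreducible} factor of $p_n$ over $F_{n-1}$, and over an arbitrary computable field there is in general no factorization algorithm and not even an irreducibility test (there exist computable fields with no splitting algorithm). So the naive recipe ``quotient by an irreducible factor'' is not effective, and one must adjoin roots without ever certifying irreducibility.

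The way around this is lazy, $\gcd$-driven splitting. Using $\gcd(p_n, p_n')$ one first reduces to the squarefree case, which is computable; one then works provisionally in $R = F_{n-1}[x]/(p_n)$, which by the Chinese Remainder Theorem is a product of the (unknown) residue fields at the irreducible factors. Whenever the running computation must invert an element represented by $a(x)$, one computes $\gcd(a, p_n)$: if it is $1$ the element is a unit and is inverted by the extended Euclidean algorithm, whereas a proper nonconstant $\gcd$ exhibits a nontrivial factorization of $p_n$ and one splits $R$ into the corresponding smaller product. Each split strictly lowers a degree and any finite computation performs only finitely many inversions, so only finitely many splits occur; the algebraic closure is then read off along a branch of this splitting process. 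The genuinely delicate point, and the heart of Rabin's argument, is the bookkeeping that keeps equality decidable and stable in the limit along the chosen branch, since refining a factor can collapse previously distinct residues; this is exactly where care is needed, and it is also why the theorem only asserts computability of \barK and of the embedding, not decidability of the image of $K$ inside \barK.
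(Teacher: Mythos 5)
The first thing to note is that the paper offers no proof to compare against: Lemma~\ref{fact:rabin-alg-clos-computable} is imported verbatim as \cite[Theorem 7]{Rabin68}, so your attempt has to be judged on its own merits. On those merits, much of your setup is right: you correctly identify the central obstruction (a computable field need not admit a splitting algorithm, so ``quotient by an irreducible factor'' is not effective), your $\gcd$-driven lazy splitting is a legitimate technique (it is essentially dynamic evaluation: treat $F[x]/(p)$ as a product of unknown residue fields and split whenever a zero divisor is exhibited), and your closing observation -- that the theorem asserts computability of $\barK$ and of the embedding but not decidability of the image of $K$ inside $\barK$ -- is exactly the subtlety that makes the result delicate.

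There remain, however, two genuine gaps. The principal one is the step you explicitly defer: ``the algebraic closure is then read off along a branch of this splitting process.'' This is not bookkeeping; it is the theorem. The splitting process is a tree of refinements, and an infinite branch of a computable tree need not be computable, so ``reading off a branch'' has no effective content as stated. What is needed is a branch selected greedily and irrevocably \emph{inside} the algorithm: for instance, upon each zeroness query about $a(\alpha)$, replace the defining polynomial $p$ of $\alpha$ by $d=\gcd(a,p)$ whenever $d$ is proper and nonconstant (thereby committing permanently to $a(\alpha)=0$, since $d\mid a$), and otherwise answer ``invertible'' ($d=1$) or ``zero'' ($d$ associate to $p$). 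One must then prove that the committed answers, processed in a canonical dovetailed order, define a \emph{decidable congruence} on terms, that the limit structure is a field algebraic over $K$ and algebraically closed, and that $a\mapsto a$ embeds $K$ computably and injectively. None of this is routine, and your write-up leaves all of it open. The secondary gap is your squarefree reduction: computing $\gcd(p,p')$ fails in characteristic $p>0$ for inseparable polynomials, whose derivative vanishes identically (e.g. $x^p-t$ over $\mathbb{F}_p(t)$ gives $\gcd(p,p')=p$, carrying no information), and the Chinese Remainder decomposition you invoke requires coprime factors, which repeated factors destroy. Both problems disappear if you drop the squarefree preprocessing entirely and rely only on the commitment rule $p\leftarrow\gcd(a,p)$, which needs no coprimality; as written, though, your argument covers at best perfect fields.
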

By an \emph{ideal} we mean a subset $I$ of $K[X]$ in which $f, g \in I$ implies $f + g \in I$ and $f \in I, p \in K[X]$ implies $f \cdot p \in I$, for all polynomials $f, g, p$. For a set of polynomials $F$, by $\idGenBy{F}$ we denote set $\{r_1 f_1 + r_2 f_2 + \ldots + r_k f_k \mid f_i \in F, r_i \in K[X], k \geq 0\}$; this is the smallest ideal that contains set $F$ called \emph{ideal generated by $F$}; elements of $F$ are called \emph{generators} of $\idGenBy{F}$.
Ideals can be represented in a finite way, as shown by Hilbert's Basis Theorem.
\begin{fact}(Hilbert's Basis Theorem, \cite[Chapter 2, §5, Theorem 5]{CoxOShea2015})\label{fact:hilbert-basis-theorem}.
	Let $K$ be a field. Every ideal in $K[X]$ can be generated by a finite set of polynomials.
\end{fact}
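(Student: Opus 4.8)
The plan is to prove the statement by induction on the number of variables $|X|$, which I assume to be finite (consistent with the paper's usage), the key being to isolate and establish the single-variable step: if a ring $R$ has the property that all its ideals are finitely generated (equivalently, $R$ satisfies the ascending chain condition on ideals), then so does $R[x]$. For the base case $|X| = 0$ we have $K[X] = K$, and a field has only the two ideals $\{0\} = \idGenBy{0}$ and $K = \idGenBy{1}$, both finitely generated. For the inductive step I would single out one variable $x \in X$ and write $K[X] = R[x]$ with $R = K[X \setminus \{x\}]$; by the inductive hypothesis every ideal of $R$ is finitely generated, so it suffices to carry out the single-variable step with this $R$.

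To prove that step, let $I \subseteq R[x]$ be an ideal. For each degree $d \geq 0$ I would collect leading coefficients: let $L_d \subseteq R$ consist of $0$ together with the leading coefficients of all polynomials in $I$ of degree exactly $d$. Each $L_d$ is an ideal of $R$, and multiplying a polynomial by $x$ raises its degree without changing its leading coefficient, which shows $L_0 \subseteq L_1 \subseteq \cdots$. Because $R$ satisfies the ascending chain condition, this chain stabilises at some index $N$, so $L_N = L_{N+1} = \cdots$. Each of the finitely many ideals $L_0, \ldots, L_N$ is finitely generated in $R$; for every generator I would pick a witnessing polynomial in $I$ whose leading coefficient is that generator and whose degree equals the corresponding index. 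This yields a finite family $\mathcal{F} \subseteq I$.

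It remains to show $\mathcal{F}$ generates $I$, which I would argue by strong induction on degree. Given any $f \in I$ of degree $d$, its leading coefficient lies in $L_d$, hence is an $R$-combination of the chosen generators (for $d > N$ one uses the stabilised ideal $L_N$); subtracting the corresponding combination of the witnesses, each multiplied by a suitable power $x^{d - d_i}$ of $x$ to align degrees, produces an element of $I$ of strictly smaller degree lying in the same coset modulo $\idGenBy{\mathcal{F}}$. Iterating drives the degree down to zero and forces $f \in \idGenBy{\mathcal{F}}$.

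The main obstacle is this final degree-reduction step, where one must match degrees correctly (using $x^{d-d_i}$ multiples of the witnesses when $d \le N$, and the stabilised generators when $d > N$) and check that each subtraction stays inside $I$; everything else is the routine equivalence between finite generation and the ascending chain condition, which is what allows the single-variable lemma to feed back cleanly into the outer induction on $|X|$.
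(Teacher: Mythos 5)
Your proof is correct: it is the classical Noetherian-induction argument (induct on the number of variables; for the one-variable step, form the chain of leading-coefficient ideals $L_0 \subseteq L_1 \subseteq \cdots$, stabilise it by the ascending chain condition, pick witness polynomials, and kill leading terms by degree induction), and all the delicate points — that each $L_d$ is an ideal, that witnesses of the correct degree exist, and that the degree-reduction step stays inside $I$ and inside the coset modulo the generated ideal — are handled. However, it is a genuinely different route from the one in the paper's cited source: the paper states the theorem as a Fact and defers entirely to Cox--Little--O'Shea, whose proof goes through monomial orderings, Dickson's Lemma on monomial ideals, leading-term ideals, and the multivariate division algorithm, i.e.\ the Gr\"obner-basis machinery. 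That choice is not incidental for this paper: the effective content of the Gr\"obner approach (explicit computation of generating sets, ideal membership, radicals, decompositions) is precisely what underpins the paper's later computability claims, such as enumerating algebraic sets (Corollary \ref{cor:algebraic-sets-enumerate}), effective decomposition into irreducibles (Lemma \ref{lemma:deco-into-irreducibles}), and computable coordinate rings (Lemma \ref{lm:coordinate-ring-computable-from-V}). Your argument is more elementary and in fact proves the stronger statement that $R$ Noetherian implies $R[x]$ Noetherian for any commutative ring $R$, but it is non-effective as written: the stabilisation index $N$ and the choice of witnesses are obtained by pure existence, so it yields finite generation without an algorithm to find generators. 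For the statement as literally written both proofs suffice; for the role the statement plays in this paper, the constructive proof of the cited source carries extra value.
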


Let $K$ be a field. An \emph{algebraic set} is a subset of $K^n$ which is a set of zeros of some set of polynomials in $n$ variables, i.e. is of form $\{v \in K^n \ | \ p(v) = 0 \text{ for all } p\in F\}$ for some set of polynomials $F \subseteq K[X]$.
For an algebraic set $V$, by $I(V)$ we denote the set of all polynomials that are zero on $V$; this is an ideal.  For a ideal $I$ in $K[X]$, by $V(I)$ we denote the set of common zeros of all polynomials of $I$; this is a variety. We say that $V(I)$ is \emph{represented} by $I$. Every algebraic set can be represented by some ideal, for example by $I(V)$; this, together with Hilbert's Basis Theorem (Fact \ref{fact:hilbert-basis-theorem}), provides a way to represent algebraic sets in a finite way.

From this we conclude the following.
\begin{cor}\label{cor:algebraic-sets-enumerate}
	Let $K$ be a computable field. Then algebraic sets can be enumerated.
\end{cor}
A \emph{coordinate ring} of an algebraic set $V \subseteq K^n$ is the ring of polynomial functions from $V$ to $K$.
An algebraic set is called an (irreducible) \emph{variety} if it cannot be represented as a finite union of pairwise distinct algebraic sets.
\begin{remark}[Notation: does algebraic set = variety?]
	Different sources introduce different names for algebraic sets. For example, in \cite{CoxOShea2015} they are called varieties, and what we call a variety is called an irreducible variety. To avoid confusion, we always precede word ``variety'' with ``(irreducible)''.
\end{remark}
\begin{lemma}(\cite[Chapter 4, §6, Theorem 2 and remarks at the end of Chapter 4, §6 ]{CoxOShea2015})\label{lemma:deco-into-irreducibles}
	Let $K$ be a computable field. Then every algebraic set can be effectively decomposed into (irreducible) varieties.
\end{lemma}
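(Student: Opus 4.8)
The plan is to separate the statement into its classical non-effective core---existence and uniqueness of the decomposition into irreducible components---from the effective refinement, where the real work lies. For existence I would argue by Noetherian induction. Hilbert's Basis Theorem (Fact~\ref{fact:hilbert-basis-theorem}) gives the ascending chain condition on ideals of $K[X]$, and through the inclusion-reversing correspondence $V \mapsto I(V)$ (together with $V = V(I(V))$ for algebraic sets) this yields the descending chain condition on algebraic sets. Hence, if the family of algebraic sets admitting no finite decomposition into irreducibles were non-empty, it would contain a minimal element $V$; this $V$ is not itself irreducible, so $V = V_1 \cup V_2$ for proper algebraic subsets $V_1, V_2 \subsetneq V$, each of which decomposes by minimality, whence $V$ decomposes as well---a contradiction. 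Uniqueness of the irredundant decomposition follows by the standard argument: each component of one decomposition lies in some component of the other (using irreducibility), and a short containment chain combined with irredundancy forces the two to coincide.

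To make the decomposition effective I would isolate three decidable primitives over a computable field. First, algebraic sets can be enumerated (Corollary~\ref{cor:algebraic-sets-enumerate}). Second, inclusion $V(I) \subseteq V(J)$ is decidable: passing to the algebraic closure $\barK$ when necessary (Lemma~\ref{fact:rabin-alg-clos-computable}), the Nullstellensatz reduces this to $J \subseteq \sqrt{I}$, i.e.\ to testing membership of each generator of $J$ in $\sqrt{I}$, which Gr\"obner basis computations decide; equality of algebraic sets then follows by two inclusions. Third, irreducibility of an algebraic set, equivalently primality of the radical of its defining ideal, is decidable.

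Given these primitives the decomposition becomes a terminating recursion. To decompose an algebraic set $V$, I would first test whether $V$ is irreducible and, if so, return $\{V\}$. Otherwise a proper splitting is guaranteed to exist, so enumerating pairs of algebraic sets and checking the decidable conditions $V = V_1 \cup V_2$ and $V_1, V_2 \subsetneq V$ locates one after finitely many steps; I then recurse on $V_1$ and $V_2$ and merge the outputs, discarding any component contained in another to restore irredundancy. The descending chain condition bounds the depth of this recursion, so the procedure halts and returns the unique irredundant list of irreducible components.

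The main obstacle is the third primitive, the effective irreducibility test, since merely failing to find a splitting is not by itself a terminating certificate of irreducibility. What is needed is an algebraic test for primality of $\sqrt{I(V)}$, and this ultimately rests on effective factorization of polynomials; such factorization is available over the computable fields used in this paper (the rationals, their algebraic extensions, and the algebraic closures supplied by Lemma~\ref{fact:rabin-alg-clos-computable}), which is what makes the whole decomposition effective. The first two primitives are comparatively routine, and once irreducibility is decidable the Noetherian bound renders termination immediate.
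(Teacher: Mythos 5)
Your proposal is correct, but note that the paper gives no proof of this lemma at all: it is imported verbatim from Cox--Little--O'Shea, where the effective content is a \emph{direct} algorithm --- compute the radical of a defining ideal, compute its minimal primes by a primary-decomposition algorithm (Gianni--Trager--Zacharias style, resting on Gr\"obner bases and polynomial factorization), and output the varieties of those primes as the components. You instead take a search-and-certify route: enumerate candidate splittings $V = V_1 \cup V_2$, certify them with a decidable inclusion test (Nullstellensatz plus radical membership), recurse, and stop at sets certified irreducible; termination follows from the descending chain condition, though you should say explicitly that the recursion tree is finitely branching, so that K\"onig's lemma upgrades ``every branch is finite'' to ``the tree is finite''. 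The direct algorithm buys an implementable procedure with complexity bounds and no enumeration over all algebraic sets; your version buys modularity and correctly isolates the one genuinely hard primitive, the irreducibility (primality) test --- though since that test is itself essentially primary decomposition, the machinery is repackaged rather than avoided. Your closing caveat deserves emphasis: it is not merely honest but necessary. Over an arbitrary computable field, irreducibility of even univariate polynomials is undecidable (by the classical Fr\"ohlich--Shepherdson construction), so the lemma as literally stated is too strong; it really requires a field admitting effective factorization. This is harmless for the paper, which invokes the lemma only over computable algebraically closed fields obtained via Lemma~\ref{fact:rabin-alg-clos-computable}, where factorization and primary decomposition are effective --- but your flag identifies a genuine gap in the statement's hypotheses, not just a loose end of your own argument.
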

\begin{lemma}(\cite[Chapter 4, §5, Proposition 3]{CoxOShea2015}) \label{lm:irreducible-iff-coordinate-ring-no-zero-divisors}
	An algebraic set is an (irreducible) variety if and only if its coordinate ring has no zero divisors.
\end{lemma}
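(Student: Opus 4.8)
The plan is to identify the coordinate ring of an algebraic set $V \subseteq K^n$ with the quotient $K[X]/I(V)$ and then translate (ir)reducibility of $V$ into the (absence or) presence of zero divisors in this quotient, via the order-reversing correspondence $W \mapsto I(W)$ between algebraic subsets of $V$ and ideals. The two facts I need about this correspondence are elementary and hold over an arbitrary field (in particular, no Nullstellensatz is required): first, that $V(I(V)) = V$ for an algebraic set $V$, which follows because $V = V(J)$ for some ideal $J$ forces $J \subseteq I(V)$ and hence $V \subseteq V(I(V)) \subseteq V(J) = V$; and second, as a consequence, that for algebraic sets $W \subseteq V$ one has $W = V$ if and only if $I(W) = I(V)$. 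I will also use that a polynomial represents the zero function on $V$ exactly when it lies in $I(V)$, so that a class in $K[X]/I(V)$ is nonzero precisely when its representative does not vanish identically on $V$.

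For the direction ``reducible $\Rightarrow$ zero divisors'', I would start from a decomposition $V = V_1 \cup V_2$ with $V_1, V_2$ algebraic sets each \emph{properly} contained in $V$. Since $V_i \subsetneq V$, the inclusion $I(V) \subseteq I(V_i)$ is strict, so I may choose $f \in I(V_1) \setminus I(V)$ and $g \in I(V_2) \setminus I(V)$. The product $fg$ then vanishes on $V_1$ (because $f$ does) and on $V_2$ (because $g$ does), hence on all of $V = V_1 \cup V_2$, so $fg \in I(V)$; thus the classes of $f$ and $g$ in $K[X]/I(V)$ are nonzero but multiply to zero.

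Conversely, for ``zero divisors $\Rightarrow$ reducible'', I would take $f, g \in K[X]$ whose classes are nonzero in $K[X]/I(V)$ while $fg \in I(V)$. Setting $V_1 = \{v \in V : f(v) = 0\}$ and $V_2 = \{v \in V : g(v) = 0\}$ gives two algebraic sets, and since $fg$ vanishes on $V$ every point of $V$ is a zero of $f$ or of $g$, so $V = V_1 \cup V_2$. Because $f \notin I(V)$ there is a point of $V$ where $f \neq 0$, whence $V_1 \subsetneq V$, and symmetrically $V_2 \subsetneq V$; so $V$ is a union of two proper algebraic subsets, i.e. reducible. (One can alternatively route the whole equivalence through ``$I(V)$ prime $\iff$ $V$ irreducible'' and ``$R/P$ a domain $\iff$ $P$ prime'', but the two direct computations above make primeness unnecessary.)

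The content here is light, so the only step needing genuine care is reconciling the paper's formulation of reducibility as a ``finite union of pairwise distinct algebraic sets'' with the convenient two-set decomposition used above. I would handle this by a short induction showing that any expression of $V$ as a finite union of proper algebraic subsets collapses to a union of exactly two proper algebraic subsets: group the first piece against the union of the rest, and if that union still equals $V$ drop the first piece and recurse on fewer terms. Once this reduction is in place, the entire argument is just the ideal–variety dictionary recorded in the preceding paragraph.
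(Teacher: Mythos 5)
Your proof is correct. Note that the paper does not actually prove this lemma---it is quoted from \cite[Chapter 4, \S 5, Proposition 3]{CoxOShea2015}, where the argument goes through the intermediate statement that $V$ is irreducible iff $I(V)$ is a prime ideal, combined with the identification of the coordinate ring with $K[X]/I(V)$ and the fact that a quotient by a prime is a domain. Your version is the same standard argument with the primeness step inlined: your two direct computations (choosing $f\in I(V_1)\setminus I(V)$, $g\in I(V_2)\setminus I(V)$ in one direction, and splitting $V=\{f=0\}\cup\{g=0\}$ in the other) are exactly the two halves of the usual proof that $I(V)$ is prime iff $V$ is irreducible, read through the quotient ring. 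Two points in your write-up deserve the emphasis you give them. First, you correctly observe that only $V(I(V))=V$ and the resulting strictness of $I(V)\subsetneq I(V_i)$ for $V_i\subsetneq V$ are needed, so no Nullstellensatz and no algebraic closure enter; this is genuinely relevant here, since the paper reuses this lemma in its algebraic-closure-free variant of Theorem~\ref{thm:modeltwo-decidable} (Section~\ref{subsec:no-alg-closed}), where $K$ is not assumed algebraically closed. Second, your reduction from a finite union to a two-set decomposition is the right repair of the paper's loosely worded definition (``finite union of pairwise distinct algebraic sets'' taken literally is vacuous, e.g.\ $V=V\cup\emptyset$); your reading via \emph{proper} algebraic subsets, with the grouping-and-recursion argument, matches the intended and standard notion. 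The one convention-level edge case, $V=\emptyset$ with the zero coordinate ring, is consistent under the paper's literal definitions (both sides hold vacuously), so nothing breaks.
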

%

%
%

	\section{Polynomial grammars with a bounded number of substitutions}\label{sec:modeltwo}
In this section we discuss decidability of zeroness problem for \pgsubss over a ring of polynomials with the following restriction: substitution operation can be used once per derivation. We call this model \emph\modeltwoundecidable.
%
Let us recall that zeroness is decidable in case of no substitutions.
\begin{lm}[\cite{BojSch2020}, Theorem 15]\label{lm:pg-over-K}
	Let $R$ be a computable ring with no zero divisors. Then equivalence is decidable for \pgs over $R$.
\end{lm}

 Let us define the main decision problem of this section, which generalises zeroness of polynomial grammars without substitution.

\noindent\textbf{\Problemtwo}
\\
\textbf{Input:}\\
$R$ - \emph{a computable ring with no zero divisors},\\
$X$ - \emph{finite set of variables},\\
$A$ - \emph{\pg over ring $R[X]$}, \\
$B$ - \emph{\pg over ring $R$}
\\
\textbf{Question:} 
\emph{Is it the case that for all $a \in A, b \in B$}
$
a(b) = 0?
$
\\
We often abbreviate the statement above to
$$
A(B) = 0.
$$
This problem can be seen as a special case of zeroness of \modeltwoundecidables, for grammars of the form $\subs(A,B)$, where $A,B$ are polynomial grammars. We call such substitution \emph{independent} because tuples of substituted values come from a different nonterminal than polynomials to which they are substituted, and therefore they are produced independently; we prove this problem decidable in Theorem \ref{thm:modeltwo-decidable}. As we will see in Section \ref{subsec:modeltwoundecidable-undecidability}, in case of \emph{dependent} substitution, the problem is undecidable.
\begin{example}
	Consider a polynomial grammar $A$ over ring of polynomials in two variables $\wt x, \w x$ with coefficients in $\Q(\wt a, \w a)$, i.e. $(\Q(\wt a, \w a))[\wt x, \w x]$ defined as $A \ra m(S), S \ra p(S)\mid (0,1,0,1)$ where $m(\wt r, \w r, \wt s, \w s) = \wt r - \wt s$ and $p(\wt r, \w r, \wt s, \w s) = (\wt r \cdot \w x \cdot \w a + \wt x \cdot \w a + \wt a, \w r \cdot \w x \cdot \w a, \wt x \cdot \w s \cdot \w a + \wt s \cdot \w a + \wt a, \w x \cdot \w s \cdot \w a)$
	and a polynomial grammar $B$ over ring of polynomials in two variables $\wt a, \w a$ and coefficients in $\Q$, i.e. $\Q[\wt a, \w a]$ defined as $B \ra q(B) \mid (0, 1)$ where $q(\wt t, \w t) = (\wt t \cdot \w a + \wt a, \wt t \cdot \w a$).
	Then $A(B)=0$.
\end{example}
\begin{remark}
	The above Example expresses the fact, under our string-to-polynomial encoding, that $(xa)^n = x^na^n$ for all words $x \in \{a\}^*$, for all $n\geq 0$.
\end{remark}

%
\partofsec{Remark on algebraic closure.} Let us discuss a subtlety before we state and prove Theorem \ref{thm:modeltwo-decidable}. Despite the fact that all polynomials arising from our string-to-polynomial encoding are elements of field \QSigF, we carry out the proof for algebraically closed fields.
It does not affect generality, as every field can be effectively embedded into algebraically closed field (Lemma \ref{fact:rabin-alg-clos-computable}).
We do so because it makes the description of the algorithm more readable, as in such case one can compute the coordinate ring of a given algebraic set (Lemma \ref{lm:coordinate-ring-computable-from-V}); however, we present a version of the proof that does not use the notion of algebraic closure in Section \ref{subsec:no-alg-closed}.

We conjecture that the algorithm from Theorem \ref{thm:modeltwo-decidable} does \emph{not} have to use elements of \barK that are not in $K$ -- in consequence, the decision of introducing algebraic closure would affect only the high-level description of the algorithm, not the algorithm itself. This might be important when trying to obtain a feasible complexity for a special case (recall in general zeroness of polynomial grammars over a ring is Ackermann-hard \cite[Theorem 1]{WorrellEtAl17}).

\subsection{\Problemtwo: decidability}
In this section we prove Theorem \ref{thm:modeltwo-decidable}.
\begin{theorem}\label{thm:modeltwo-decidable} 
Let $R$ be a computable ring with no zero divisors. Then \problemtwo over $R[X]$ is decidable.
\end{theorem}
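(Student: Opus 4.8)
The plan is to reduce the question ``$A(B)=0$'' to a disjunction of ordinary zeroness problems over coordinate rings, each of which has no zero divisors, so that Lemma~\ref{lm:pg-over-K} applies. The starting observation is that $A(B)=0$ holds if and only if every coordinate polynomial of every $a\in L(A)$ vanishes at every point $b\in L(B)$. Since $R$ has no zero divisors it embeds into the algebraic closure \barK of its field of fractions, which by Lemma~\ref{fact:rabin-alg-clos-computable} is again a computable field; I view $L(A)\subseteq\barK[X]^m$ (with $m$ the dimension of $A$) and $L(B)\subseteq\barK^{|X|}$, noting that vanishing of a value of $R$ is unaffected by this embedding. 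The condition then reads: every polynomial occurring in $L(A)$ vanishes on $L(B)$, equivalently on the smallest algebraic set $V^\ast$ containing $L(B)$.

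I would decide the problem by running two semi-procedures in parallel. The first searches for a counterexample: enumerate all pairs of derivations of $A$ and $B$, compute the resulting $a\in L(A)$, $b\in L(B)$ and the tuple $a(b)\in R^m$, and halt with answer \emph{no} as soon as some coordinate is nonzero (decidable since $R$ is computable); this halts precisely when $A(B)\neq 0$. The second semi-procedure verifies a positive answer by guessing $V^\ast$: enumerate algebraic sets $V\subseteq\barK^{|X|}$ (Corollary~\ref{cor:algebraic-sets-enumerate}), and for each $V$ perform two checks. First, test $L(B)\subseteq V$: take finite generators $p_1,\dots,p_s$ of an ideal representing $V$ (Fact~\ref{fact:hilbert-basis-theorem}); then $L(B)\subseteq V$ iff each $p_j$ vanishes on $L(B)$, which is the zeroness of the polynomial grammar $p_j(B)$ over the field \barK, decidable by Lemma~\ref{lm:pg-over-K}. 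Second, if this succeeds, decompose $V=V_1\cup\dots\cup V_r$ into irreducible varieties (Lemma~\ref{lemma:deco-into-irreducibles}); for each $V_i$ compute its coordinate ring $\mathcal{O}(V_i)$ (Lemma~\ref{lm:coordinate-ring-computable-from-V}), which has no zero divisors by Lemma~\ref{lm:irreducible-iff-coordinate-ring-no-zero-divisors}, push $A$ through the quotient homomorphism $\barK[X]\to\mathcal{O}(V_i)$ (grammar operations being ring operations, this maps $L(A)$ coordinatewise) to obtain a polynomial grammar $\bar A_i$ over $\mathcal{O}(V_i)$, and decide its zeroness by Lemma~\ref{lm:pg-over-K}. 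Since the kernel of this homomorphism is exactly the ideal of $V_i$, one has $\bar A_i=0$ iff every polynomial of $L(A)$ vanishes on $V_i$. If $L(B)\subseteq V$ and all $\bar A_i=0$, halt with answer \emph{yes}.

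For correctness, soundness of the yes-branch is immediate: if some $V$ passes both checks then every polynomial of $L(A)$ vanishes on $V=\bigcup_i V_i$ while $L(B)\subseteq V$, so $a(b)=0$ for all $a\in L(A)$, $b\in L(B)$. For completeness, if $A(B)=0$ then $V^\ast$ is eventually enumerated; the first check holds by definition of $V^\ast$, and since every polynomial of $L(A)$ vanishes on $L(B)$ it vanishes on the closure $V^\ast$ and hence on each irreducible component, so the second check succeeds and \emph{yes} is produced. As exactly one of the two semi-procedures halts, this decides the problem. The main obstacle is that the closure $V^\ast$ cannot be computed directly --- finite subsets of $L(B)$ yield an ascending chain of algebraic sets that need not stabilise --- which is why I resort to guessing $V$ and verifying it against $B$; the conceptual crux is the reduction of the vanishing test to zeroness over the coordinate ring, which is only legitimate after passing to irreducible components, where the coordinate ring becomes an integral domain and Lemma~\ref{lm:pg-over-K} becomes applicable.
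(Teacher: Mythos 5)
Your proposal is correct and takes essentially the same route as the paper's proof: embed $R$ into the computable algebraic closure of its field of fractions, run a counterexample-searching semi-procedure in parallel with one that enumerates algebraic sets $V$, test $L(B)\subseteq V$ via zeroness of $f(B)$ for ideal generators $f$ (Lemma~\ref{lm:pg-over-K}), and test vanishing of $A$ on $V$ by decomposing $V$ into irreducible components and deciding zeroness of $A$ over their coordinate rings, which have no zero divisors. The only cosmetic difference is the witness used in the completeness argument: you take the Zariski closure of $L(B)$, while the paper takes $V(A)$, the common zero set of the language of $A$; both work.
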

As stated in Remark \ref{rm:reduction-to-an-undecidable-problem}, the reduction from Lemma \ref{lm:RTsubss-can-be-simulated-by-pgsubss}, restricted to the special case from Theorem \ref{thm:modeltwo-decidable}, yields decidability of the corresponding problem for register transducers (Section \ref{subsec:interpretation-as-sys-of-eqs-testing}).

The proof is based on the proof of \cite[Corollary 8.2]{SeidlEtAl2018} in the presentation of \cite{BojSiglog2019}.
\begin{proof}[Proof of Theorem \ref{thm:modeltwo-decidable}]
	Without loss of generality we may assume that grammar $A$ is of dimension 1 (notice we cannot assume it for $B$). In this proof, by abuse of notation, we identify a \pg $A$ with its the language. We treat $A$ and $B$ as polynomial grammars over a computable algebraically closed field that contains $R$, for example algebraic closure of its field of fractions.
	
	There is a clear semi-procedure for non-zeroness -- it essentially amounts to enumerating all derivations. In rest of the proof, we show a semi-procedure for zeroness.
	
	Observe that $A(B) = 0$ if and only if for some algebraic set $V$:
	$$\begin{cases}
				B \subseteq V \text{ (i) and}  \\
				 A\restr{V} = 0 \text{ (ii)},
	\end{cases}$$
where $A\restr{V}$ denotes set $\{f\restr{V} \mid f \in A\}$.
Indeed -- for example, as $V$, one may take $V(A)$.
	The semi-procedure for equivalence is as follows: guess, by infinite enumeration (see Corollary \ref{cor:algebraic-sets-enumerate}), an algebraic set $V$ and test conditions (i) and (ii), which can be done the following way.
	\begin{sublm}
		For a given $V$, condition (i) can be tested effectively.	
\end{sublm}
	\begin{proof}
		Condition (i) reduces to zeroness of \pgs in the following way. Denote by $I$ the ideal that $V$ is represented by. Observe that $B \subset V(I) = V$ if and only if $f(B) = 0$ for all generators $f$ of $I$.
	\end{proof}
	\begin{sublm}\label{sublm:condition-2}
		For a given $V$, condition (ii) can be tested effectively.
	\end{sublm}
	\begin{proof}[Proof sketch]
		Decompose $V$ into (irreducible) varieties $V = V_1 \cup V_2 \cup \ldots \cup V_k, \ \! \! k \geq 1$ (see Lemma \ref{lemma:deco-into-irreducibles}). Then $A\restr{V} = 0$ holds iff $A\restr{V_i} = 0 \text{ for } i =1,\ldots, k$. Observe that $A\restr{V_i} = 0$ is equivalent to zeroness of $A$, treated as a \pg over coordinate ring of $V_i$ (see Lemma \ref{lm:coordinate-ring-computable-from-V} for effectiveness); it has no zero divisors (Lemma \ref{lm:irreducible-iff-coordinate-ring-no-zero-divisors}) and hence decidability follows from Lemma \ref{lm:pg-over-K}.
	\end{proof}
\end{proof}

\partofsec{Generalisation.}
	An analogous problem can be defined for arbitrary number of grammars; for $k$ grammars, we will abbreviated it to $A_1(A_2(\ldots (A_{n-1}(A_k)))) = 0$. Let us sketch the proof of its decidability -- it is very similar to the proof of case $n = 2$. 
	Observe that $A_1(A_2(A_3(\ldots (A_n)\ldots))) = 0$ is equivalent to $A_2(A_3(\ldots (A_n)\ldots)) \subseteq V$ and ${A_1}\restr{V} = 0$ for some algebraic set $V$. 
	For a given $V$, first condition is equivalent to $f(A_2(A_3(\ldots (A_n)\ldots))) = 0$ for generators $f$ of ideal that $V$ is represented by -- this is decidable by induction assumption for $n-1$ polynomial grammars $f( A_2), A_3, \ldots, A_n$. Second condition was proven decidable in Sublemma \ref{sublm:condition-2}. 
\begin{remark}
	For any \pg $A$ there always exists a \emph{finite} set of polynomials $F$ such that $A(v) = 0$ if and only if $F(v) = 0$, for every vector $v$. In particular $A(B)=0$ if and only if $F(B) = 0$. Given a finite set of polynomials $F$, $F(B)=0$ can be decided (Lemma \ref{lm:pg-over-K}). However, it is not clear how to compute such set $F$  given grammar $A$.
\end{remark}
\begin{remark}
	Let $A,B$ be polynomial grammars. If the grammar $A$ is linear, i.e. there is at most one nonterminal on the right-hand side of every production rule, then grammar $A(B)$ can be seen as a polynomial grammar too -- it is enough to ``concatenate'' those two grammars by replacing production rules of grammar $A$ of form $nonterminal \ra constant$ with $nonterminal \ra B$. However in case $A$ is not linear, it is not clear how to convert $A(B)$ to a polynomial grammar.
\end{remark}
\subsection{\Modeltwoundecidable: undecidability}\label{subsec:modeltwoundecidable-undecidability}
In previous section we saw that restriction to one \emph{independent} substitution per derivation yields decidability of zeroness. In this section we show that in general case of one \emph{dependent} substitution, this problem is undecidable. We call such substitution dependent because substituted tuples and polynomials to which they are substituted may come from the same nonterminal and hence be generated dependently.

\begin{restatable}{thmrestat}{thmModeltwoundecidableUndecidable}\label{thm:modeltwoundecidable-undecidable}
	Zeroness of \modeltwoundecidables over $\Z[x]$ is undecidable.
\end{restatable}
In contrast to Theorem \ref{thm:modeltwo-decidable}, the reduction from Lemma \ref{lm:RTsubss-can-be-simulated-by-pgsubss} restricted to special case from Theorem \ref{thm:modeltwoundecidable-undecidable} does \emph{not} yield decidability nor undecidability of analogous problem for register transducers.

The proof was suggested by Lasota and Pi{\'o}rkowski \cite{SlawekRadekPersonal2019}. It is analogous to proof of  \cite[Theorem 17]{BoiretEtAl2018}, except two-counter machines are replaced by reset VASS-es; used encoding of reset VASS into register transducer is similar to \cite[Example 3]{BenediktDuffSharadWorrell2017}.

	By a \emph{VASS} we mean a Vector Addition System with States. We call vectors that describe transitions \emph{step vectors}.
	By a \emph{reset VASS} we mean an extension of VASS where transitions may also reset some of coordinates to 0.
	By a \emph{unit step vector} we mean a step vector of form $(0,0,\ldots,0,1,0,\ldots,0)$.
	For a reset VASS, by \zerovectVASS we denote vector of zeros of its dimension.

\begin{proof}[Proof sketch]
	We show a reduction from reachability in reset VASS, which is known to be undecidable \cite[Theorem 5]{ArakiKasami76}. 
	Without loss of generality assume it is reachability from \zerovectVASS  to \zerovectVASS  and step vectors are unit -- this ensures that sum of coordinates is an integer from interval $[0,n]$ after $n$ steps of a valid run. Given a reset VASS $\V$ we construct a register transducer with substitution \emph{over ring $\Z[x]$} that uses substitution only in the output transition (hence once per run) such that $\V$ can reach \zerovectVASS from \zerovectVASS if and only if the register transducer returns a non-zero output on some input (such register transducers are a different formalism for the same model as \modeltwoundecidables).
	
	We describe the construction. The transducer takes a sequence of transitions of $\V$ as an input (this is a word over finite alphabet), which may describing a (valid) run of \V or not. The state of the transducer is the state of $\V$ and turns to error state if in the input word the states of some pair of consecutive transitions do not match. The transducer holds the counter values of $\V$ in registers that we call \emph{counter registers}. It has an \emph{error register} which turns 0 when an ``error'' occurs, i.e. some coordinate of $\V$ goes below 0 (note that step vectors are unit so such coordinate equals $-1$) and holds 0 for the rest of the run -- this is achieved by multiplying the error register by the product of values of counter registers, each value incremented by 1, in each update. There is a \emph{reachability test register} that holds a polynomial $(x-1)(x-2)\ldots (x-n)$ after $n$ steps (formally, to construct it one needs an \emph{auxiliary register} that holds number $n$); it has a property that, for an integer input from interval $[0,n]$, it evaluates to 0 if and only if the input number is non-zero. Finally, if the state is accepting, in output transition the transducer returns error register multiplied by reachability test register evaluated at the sum of counter registers (this involves substitution of a register into another register); observe this is a non-zero number if and only if input word is a (valid) run of \V and this run reaches \zerovectVASS from \zerovectVASS.
	We give details of the construction in Section \ref{subsec:undecidable-symbols}.
\end{proof}
\subsection{Interpretation of Theorem \ref{thm:modeltwo-decidable} and Theorem \ref{thm:modeltwoundecidable-undecidable} as testing infinitely many equations on a language generated by a register transducer} \label{subsec:interpretation-as-sys-of-eqs-testing}
Consider a register transducer $E$ that returns pairs of words over alphabet $\Delta \cup X$ and a register transducer $T$ that returns $X$-tuples of words over $\Delta$. Transducer $E$ can be interpreted as a generator of equations (constraints) and $T$ as generator of tuples to be tested. Then Theorem \ref{thm:modeltwo-decidable} states that it is decidable if all tuples of words generated by $T$ satisfy all equations generated by $E$. On the other hand, Theorem \ref{thm:modeltwoundecidable-undecidable} states that, in case when polynomials and integers are generated, it is undecidable for a \pg that generates ``pairs'' (equation, tuple of vectors) if the ``second'' coordinate is always a solution for the ``first'' one (formally, the numbers of mentioned coordinates differ as e.g. an equation alone requires two coordinates). 
In such interpretation, words ``independent'' and ``dependent'' describe the relationship between equations and tested tuples.
We include a formal presentation in Section \ref{subsec:formal-pres-of-interpretation}.
	\section{\PGevals}\label{sec:modelone}
	In previous section we considered \pgs with restricted number of uses of substitution per derivation. Another way of restricting substitution is to consider evaluations -- instead of adding general ($1+|X|$) - ary substitution operation 
$$
\subs:K(X) \times K(X)^{X} \ra K(X)
$$
 we add \emph{evaluations}, which are substitutions where the argument from $K(X)^{X}$ is fixed, i.e. unary operations 
 \begin{align*}
 &\eval_{v}: K(X) \ra K(X),	\\
&\eval_{v}(p) \mapsto p(v),
 \end{align*}
for vectors  $v \in K(X)^{X}$.
We call this restriction of \pgsubss \emph{\pgevals}. 

In this section we prove Theorem \ref{thm:modelone-decidable} that states that \pgevals that satisfy \simautcondition (Definition \ref{df:pg-aut}) have decidable zeroness. 
The core technical ingredient is Lemma \ref{lm:pgaut-decidable}, which is a result of discussions with Boja\'{n}czyk, Worrell, Shirmohammadi and Kiefer \cite{WorrellPersonal}, allows to use simultaneous field automorphisms in \pgs over a field. Lemma \ref{lm:com-inj-induce-automorphisms} gives an easy-to-check characterisation of word substitutions that 
are mapped to field automorphisms by the string-to-polynomial encoding from Definition \ref{df:str-to-pol-encoding}.
Theorem \ref{thm:modelone-decidable} concludes those two lemmas.
\subsection{\Modelone: decidable equivalence.}
\begin{df}
	Let $R$ be a ring. An \emph{automorphism} of $R$ is a bijective function $h:R \ra R$ that satisfies $h(\op(a, b)) = \op(h(a), h(b))$ for $\op$ being either addition or multiplication for every $a, b \in R$.
	For example, evaluation $\eval_v$ for $v = x+1$, i.e. 
	$$
	\Q(x) \ni f \mapsto f[x:=x+1] \in \Q(x)
	$$
	 is an automorphism of the field $\Q(x)$.
	An automorphism of  product of fields $K^n$ is called \emph{simultaneous} if it is of form $\alpha^n = \underbrace{(\alpha, \alpha, \ldots, \alpha)}_n$ for some automorphism $\alpha$ of $K$.
\end{df}
\begin{df}\label{df:pg-aut}
	A \emph{\pgaut/evaluations} is an extension of \pg over a field where production rules are of form $Y \ra p(\alpha^{N}(Y_1, \ldots Y_n))$ where $\alpha$ is a field automorphism/an evaluation and $N=\dim Y_1 + \ldots + \dim Y_n$.
\end{df}
\begin{lm}\label{lm:pgaut-decidable}
	Let $K$ be a computable field.
	Then \pgauts over $K$ have decidable equivalence.
\end{lm}
Similarly as for Theorem \ref{thm:modeltwo-decidable}, the reduction from Lemma \ref{lm:RTsubss-can-be-simulated-by-pgsubss}, restricted to the special case from Lemma \ref{lm:pgaut-decidable}, yields decidability of equivalence for certain class of register transducers with substitution (Theorem \ref{thm:modelone-decidable}).
\begin{proof}[Proof of Lemma \ref{lm:pgaut-decidable}]
	A proof analogous to proof of Lemma \ref{lm:pg-over-K} can be performed for {\pgauts} -- it is enough to show that preimage of an algebraic set by simultaneous automorphism of  $K$ is effectively an algebraic set (note that this fact holds for any algebra, with the same proof as for fields).
	
	Let $\alpha$ be an automorphism of $K$; by abuse of notation, denote the same way coordinate-wise application of it.
	Then  $\alpha^{-1}(V(\enum p1n)) = V(p_1\circ \alpha, \ldots, p_n \circ \alpha)$. Functions $p_i \circ \alpha$ might not be polynomial, however for a polynomial $p(\enum x1n) = \sum_{i=0}^n a_ix^i$ equality $(p\circ\alpha)(\enum x1n) = 0$ is equivalent to a polynomial equation  $\sum_{i=0}^n \alpha^{-1}(a_i)x^i = 0$. 
\end{proof}

\begin{restatable}{defrestat}{dfComInjective}\label{df:com-injective}
	We call a word substitution $p : \Sigma \ra \Sigma^*$ \emph{com-injective} if $w \mapsto p(w)$ is injective, when treated as a mapping of commutative words.
\end{restatable}
By abuse of notation, we also denote as com-injective a polynomial substitution, which is an image of com-injective word substitution by our string-to-polynomial encoding.

\begin{lm}\label{lm:non-vanishing-is-cominjective}
	If $p$ is a substitution for one letter, call it $\sigma$, (in other words, $p$ is an identity function on all letters except $\sigma$) and it does not vanish letter $\sigma$, then it is com-injective. Example of such substitution is $p: a \mapsto bacbaab$. A non-example of such substitution is $p: a \mapsto bbc$.
\end{lm}
\begin{proof}
	Straightforward.
\end{proof}
In the following lemma we characterise word substitutions that induce field automorphisms via our string-to-polynomial encoding. Finding such word substitutions is challenging, as almost no word substitution induces an automorphism of field $\QSigF$. However, a reasonably large class of word substitutions induces automorphisms of the field $\QQzero$, an extension of \QSigF that is defined below.

Notions of a polynomial and a rational function, considered as formal expressions, can be extended to situation where some of variables have exponents from an arbitrary infinite monoid, that can be embedded in some group -- ``ordinary'' polynomials use monoid $(\mathbb N, + ,0)$. We use this extension with monoid $(\mathbb Q_{\geq 0}, +, 0)$. Let $K$ be a field. By \KQzeroR we denote the ring of polynomials over variables $\wt \Sigma \cup \w \Sigma$ where variables $\w \Sigma$ have exponents in $(\mathbb Q_{\geq 0}, +, 0)$ and by $\KQzeroF = \{\frac{p}{q} \ | \ p, q \in \KQzeroR, q\neq 0\}$ the corresponding field of rational functions; let us emphasize that the latter contains elements with negative exponents as well, e.g. ${\w a} ^{-1/2} = \frac{1}{{\w a} ^{1/2}}$.

\partofsec{A subtlety.} There is a subtlety regarding defining evaluations. In field \KQzeroF, as opposed to \KSigF, evaluation is \emph{not} defined for every vector: for it to do so, it must evaluate variables from $\w\Sigma$ on monomials in variables from $\w\Sigma$; however polynomial substitutions induced by word substitutions via our word-to-polynomial encoding have that property.

\begin{restatable}{lemmarestat}{lmComInjInduceAutomorphisms}\label{lm:com-inj-induce-automorphisms}
	Com-injective word substitutions induce automorphisms of field $\QQzero$.
	\\
	Non-com-injective word substitutions do not induce automorphisms of any field that contains \QSigF.
\end{restatable}
\begin{cor}
	Grammar $Y$ from Introduction is a \pgaut via our string-to-polynomial encoding.
\end{cor}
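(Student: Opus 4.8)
The plan is to identify the single substitution used by $Y$, show it induces a field automorphism through Lemmas~\ref{lm:non-vanishing-is-cominjective} and~\ref{lm:com-inj-induce-automorphisms}, and then recast the encoded productions into the form required by Definition~\ref{df:pg-aut}. Every rule $q_a$ of $Y$ uses the same substitution $[\# := \# \cdot a]$, which substitutes only the letter $\#$ and does not vanish it (its image $\# \cdot a$ still contains $\#$). Hence by Lemma~\ref{lm:non-vanishing-is-cominjective} it is com-injective, and by Lemma~\ref{lm:com-inj-induce-automorphisms} the polynomial substitution $\alpha_a$ it induces on $\QQzero$ via $\Phi$ (Definition~\ref{df:str-to-pol-encoding}) is a field automorphism. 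Since $\Phi$ commutes with substitution, applying $\Phi$ to the word grammar $Y$ turns each $q_a$ into a production over $\QQzero$ whose only non-ring operation is the automorphism $\alpha_a$, the concatenations being realised by the polynomial identities $\wt{wv}=\wt w\,\w v+\wt v$ and $\w{wv}=\w w\,\w v$.

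The remaining point is that Definition~\ref{df:pg-aut} applies the automorphism \emph{simultaneously} to all coordinates as $\alpha_a^{N}$, whereas $q_a$ substitutes only the coordinates carrying $w_1$ and $w_3$. I would resolve this with a one-line invariant: along every derivation of $Y$ the coordinates $w_2$ and $w_4$ are $\#$-free, as they start at $\varepsilon$ and each rule only prepends a letter $a\in\Sigma$. Under $\Phi$ this means their encodings $\wt{w_2},\w{w_2},\wt{w_4},\w{w_4}$ contain none of the variables $\wt{\#},\w{\#}$, which are the only variables moved by $\alpha_a$; thus $\alpha_a$ fixes these coordinates. Consequently, applying $\alpha_a^{N}$ to the whole tuple agrees on $L(Y)$ with substituting only into $w_1,w_3$, and each rule can be written as $Y \ra p_a(\alpha_a^{N}(Y))$, where $p_a$ performs the surrounding concatenations with the constant blocks $\#a$ and $a$ --- exactly the shape of Definition~\ref{df:pg-aut}.

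I expect the main obstacle to be precisely this reconciliation of partial versus simultaneous substitution. Two things must be checked: that $\alpha_a$ really acts as the identity on the $w_2,w_4$ coordinates --- which uses the $\#$-free invariant together with the fact that $\alpha_a$ preserves $\#$-freeness, so the invariant remains available after each derivation step --- and that the word operations surrounding the substitution translate into an honest polynomial map $p_a$ over $\QQzero$. Everything else is a direct application of the two automorphism lemmas, so once this rewriting is justified the corollary is immediate.
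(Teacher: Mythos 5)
Your proof is correct and follows essentially the same route as the paper's: com-injectivity of $[\#:=\#\cdot a]$ via Lemma~\ref{lm:non-vanishing-is-cominjective}, automorphisms via Lemma~\ref{lm:com-inj-induce-automorphisms}, and simultaneity justified by the observation that the $w_2,w_4$ coordinates are $\#$-free, so the substitution acts on them as the identity. Your write-up is in fact more detailed than the paper's, which only states the $\#$-freeness point in a parenthetical remark rather than as an explicit inductive invariant.
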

\begin{proof}
	Substitutions in the grammar are simultaneous -- both register transducers apply the same word substitutions at the same steps on all of their registers 
	(notice that update of register $S$ does not use substitution explicitly, but it could with no difference, as strings stored in $S$ do not contain symbol $\#$, for which the substitution is performed). They are com-injective \ref{lm:non-vanishing-is-cominjective} and hence induce automorphisms (Lemma \ref{lm:com-inj-induce-automorphisms}).
\end{proof}
From Lemma \ref{lm:com-inj-induce-automorphisms} one can conclude Theorem \ref{thm:modelone-decidable}.
\begin{theorem}\label{thm:modelone-decidable}
\ModeloneRTs have decidable equivalence.
\end{theorem}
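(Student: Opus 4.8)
The plan is to assemble the theorem from the two lemmas that precede it, treating it as essentially a transfer result. First I would take two transducers from the \ModeloneRTs class — that is, register transducers with substitution whose only substitution steps are com-injective evaluations applied simultaneously to all registers — and run them through the reduction of Lemma \ref{lm:RTsubss-can-be-simulated-by-pgsubss}, in the form valid for this special case as recorded in Remark \ref{rm:reduction-to-an-undecidable-problem}. This yields a single polynomial grammar with substitution over the ring of polynomials, for which zeroness holds exactly when the two transducers are equivalent. Because the string-to-polynomial encoding $\Phi$ of Definition \ref{df:str-to-pol-encoding} ``commutes'' with substitution, each evaluation step of a transducer is turned into precisely the polynomial substitution it induces, so the resulting grammar uses only the induced evaluations and no other substitution operation.

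Next I would argue that this grammar is in fact a \pgaut in the sense of Definition \ref{df:pg-aut}, working over the field \QQzero rather than over \QSigF. Two points have to be checked. First, \emph{simultaneity}: since in the \Modelone model a transducer applies one and the same word substitution to all of its registers at a given step, the induced polynomial substitution is applied coordinate-wise by a single map, which is exactly the shape $\alpha^{N}$ demanded of a simultaneous evaluation. Second, each such $\alpha$ must be a field automorphism, and this is precisely Lemma \ref{lm:com-inj-induce-automorphisms}, which guarantees that a com-injective word substitution induces an automorphism of \QQzero. Here I would also dispatch the subtlety flagged before that lemma — that evaluation in \QQzero is only partially defined — by observing that substitutions coming from our encoding send each variable of $\w\Sigma$ to a monomial in $\w\Sigma$, so the evaluations are everywhere defined and are legitimate \pgaut productions.

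With the grammar recognised as a \pgaut over \QQzero, I would invoke Lemma \ref{lm:pgaut-decidable}; its only hypothesis is that the base field be computable. This reduces to noting that \QSigF is computable and that adjoining exponents from the computable monoid $(\mathbb Q_{\geq 0}, +, 0)$ keeps arithmetic effective, so that \QQzero is computable. Finally, since the zeroness query produced by the reduction is the instance of \pgaut equivalence against a grammar generating only $0$, Lemma \ref{lm:pgaut-decidable} decides it, and hence decides equivalence of the original \ModeloneRTs.

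The main obstacle I anticipate is not a single hard computation but the bookkeeping certifying that the output of the reduction genuinely lands inside the \pgaut model: one must confirm that the simultaneity condition built into \Modelone transducers survives the reduction unchanged, that the only operations introduced beyond $+$ and $\cdot$ are the induced automorphisms (so that Definition \ref{df:pg-aut} is literally met), and that the passage from \QSigF to the larger field \QQzero — forced by Lemma \ref{lm:com-inj-induce-automorphisms} — disturbs neither computability nor the correctness of the encoding. Once these alignments are verified, the theorem follows immediately from Lemmas \ref{lm:pgaut-decidable} and \ref{lm:com-inj-induce-automorphisms}.
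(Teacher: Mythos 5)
Your proposal is correct and takes essentially the same approach as the paper: reduce via the encoding of Lemma \ref{lm:RTsubss-can-be-simulated-by-pgsubss}, recognise the resulting grammar as a \pgaut over \QQzero by Lemma \ref{lm:com-inj-induce-automorphisms}, and conclude with Lemma \ref{lm:pgaut-decidable}. The extra checks you spell out (simultaneity surviving the reduction, total definedness of the induced evaluations, computability of \QQzero) are precisely the points the paper handles in its surrounding remarks and the ``subtlety'' paragraph, so nothing is missing.
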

\begin{proof}
	Due to Lemma \ref{lm:com-inj-induce-automorphisms}, the corresponding \pgeval is a \modelone, when considered over the field \QQzero. Its zeroness is decidable due to Lemma \ref{lm:pgaut-decidable}.
\end{proof}
Now we give a proof of Lemma \ref{lm:com-inj-induce-automorphisms} for an example substitution, in order to to avoid unnecessary formalism; the proof idea is the same in the general case, for which we give a proof in Section \ref{subsec:proof-thm:modelone-decidable}.
\begin{proof}[Proof of Lemma \ref{lm:com-inj-induce-automorphisms} for an example substitution]
	Let $\Sigma = \{a,b\}$. Consider a substitution of words $p: a^* \ra a^*$ defined as $p: a \mapsto aa$. Then induced substitution of polynomials is $p': \widetilde{a} \mapsto \widetilde{a}\overline{a} + \widetilde{a}, 
	\overline{a} \mapsto \overline{a}^2.$
	This substitution is not invertible when considered in the field \QSigF, however we show it is invertible when considered in the field \QQzero. To find the inverse substitution, we consider a system of equations:
	\begin{align}\label{eq:sys-of-eq}
		\begin{cases}
		\widetilde{a}' = \widetilde{a} \cdot (\overline{a} + 1),   \\
		\overline{a}' = \overline{a}^2.
		\end{cases}
	\end{align}Second equation is purely in variable $\w a$, hence we solve it first; it is a linear equation, where the exponents play the role of coefficients from the field $\Q$. We get $\w{a} = {\w{a}'}^{1/2}.$ We substitute this into the first equation and get 
	$
	\widetilde{a}' = \widetilde{a} \cdot ({\overline{a}'}^{1/2} + 1), 
	$
This is a linear equation in variable $\wt{a}$ with coefficients in the field $\Q(\wt a', {\w{a}'}^{\Q_{\geq 0}})$. Solving it, we obtain
	$
		\widetilde{a} = \frac{\widetilde{a}'}{1 + {\overline{a}'}^{1/2}}.
		$
	This shows that substitution $\widetilde{a} \mapsto \frac{\widetilde{a'}}{1 + {\overline{a}'}^{1/2}}, 
	\overline{a} \mapsto {\overline{a'}}^{1/2}$
	is the right-inverse of $p$. It is a two-sided inverse because the equations are linear (in the mentioned sense). This finishes the proof for this example.
\end{proof}

\begin{remark} Let us notice that for some grammars, Theorem \ref{thm:modelone-decidable} can be reduced to zeroness of \pgs \emph{without} substitution by simply removing the automorphisms from production rules -- 
	for example this is the case when polynomial functions used in productions have integer coefficients.
However this is not true in general when ``new'' occurrences of substitutable constants are introduced in the productions.
For example, consider \pgsubs over $\Q[a,b]$ with production rules 
$$
S \ra p(X), X \ra q(Y), Y \ra a-b,
$$ where $p,q:\Q[a,b] \ra \Q[a,b]$ are defined as $p(f) = f+b, q(f) = f[b:=a+b]$. Then $S=0$, but if automorphisms were removed, we would have $S = a$.
\end{remark}
	\section{Summary, future work}\label{sec:sum}
	Theorem \ref{thm:modeltwoundecidable-undecidable} draws a rather pessimistic view on \pgsubss over ring of polynomials -- even almost the simplest model has undecidable zeroness. In this paper we show positive results when the model is restricted either to independent substitution (Theorem \ref{thm:modeltwo-decidable}) or to evaluations (Theorem \ref{thm:modelone-decidable}). Theorem \ref{thm:modeltwo-decidable} can be interpreted as decidability of testing infinite systems of equations on languages, where both the system and the language can be generated by register transducers (Section \ref{subsec:interpretation-as-sys-of-eqs-testing}).
Theorem \ref{thm:modelone-decidable} seems to have large limitations: first, every evaluation must be ``simultaneous'' i.e. in each production the same evaluation has to be applied to all coordinates of all nonterminals at once, second, this evaluations needs to induce a field automorphism, which excludes evaluations that vanish letters as they are not an injective.
 Having said that, in Introduction we define grammar $Y$ that satisfies both these conditions and can be used to decide equivalence of two register transducers (also defined in Introduction) that compute function $\sqrev$ .

Finally, we ask open questions. First one is about expressiveness of register transducers with evaluations (no automorphism condition).
\begin{enumerate}
	\item What are other interesting string-to-string functions computed by register transducers with evaluations whose evaluations are not allowed to vanish letters?
	\setcounter{enumi_val}{\value{enumi}}
\end{enumerate}
Function similar to $\sqrev$, that maps each word $w$ to $\rev(w)^{|w|}$ seems not to be an example, as it seems to require introducing symbol $\#$ in a way analogous to $\sqrev$ and then vanishing it.

Also, it is interesting what problems about register transducers are decidable, in particular:
\begin{enumerate}
	\setcounter{enumi}{\value{enumi_val}}
	\item Is equivalence undecidable?
	\item Is equivalence decidable if we restrict to register transducers with evaluations (this question could be asked also for polynomial grammars)?
\end{enumerate}
	%
	\bibliographystyle{plain}
	\bibliography{biblio}
	\appendix\section{Appendix}
	\subsection{Formal presentation of Section \ref{subsec:interpretation-as-sys-of-eqs-testing}}\label{subsec:formal-pres-of-interpretation}
In this section we give a formal presentation of the problem presented in Section \ref{subsec:interpretation-as-sys-of-eqs-testing}.
For a \pg $A$, by abuse of notation we identify it with its language. We represent equations as a pair of polynomials, denoted as $e = (e_1, e_2)$; a potential solution $v$ is a vector of polynomials; $v$ is a solution of $e$ iff $e_1(v) = e_2(v)$.
\\
\\
\textbf{Independent equation satisfiability}
\\
\textbf{Input:}\\
$\Delta$ - \emph{finite alphabet}, \\
$X$ - \emph{finite set of variables},
\\
$E\subseteq ((\Q[\Delta])[X])^2$ - \emph{set of equations in variables $X$}, given by \pg $E$ of dimension 2.
\\
$T\subseteq (\Q[\Delta])^{|X|}$ - \emph{tested language}, given by \pg $T$ of dimension $|X|$,
\\
\textbf{Question:} 
Is it the case that for all inputs $(e_1, e_2) \in E, v\in T$: 
$$
	e_1(v) = e_2(v)?
$$
\\
\textbf{Dependent equation satisfiability}\\
\textbf{Input:}\\
$\Delta$ - \emph{finite alphabet}, \\
$X$ - \emph{finite set of variables},
\\
$G$ - polynomial grammar over $(\Q[\Delta])[X]$
of dimension $(|X|+ 2)$
\\
\textbf{Question:} 
Is it the case that for all inputs $((e_1, e_2), v) \in G$: 
$$
e_1(v) = e_2(v)?
$$
\subsection{Detailed proof of Theorem \ref{thm:modeltwoundecidable-undecidable}}
\label{subsec:undecidable-symbols}
In this section we give a details of the construction of register transducer from proof of Theorem \ref{thm:modeltwoundecidable-undecidable}.
\thmModeltwoundecidableUndecidable*
	\begin{proof}[Details of construction from \ref{thm:modeltwoundecidable-undecidable}]	
	Given a reset VASS $\V$ we construct a register transducer with substitution over ring $\Z[x]$ that uses substitution once per run such that $\V$ can reach \zerovectVASS from \zerovectVASS if and only if the register transducer returns a non-zero output on some input. We omit description of the states as it is clear.
	 
	\noindent \textbf{Description} of register transducer: \\
	\emph{Registers:}\\
	\indent $R_1$: will store $\mathbb{Z}[x]$ (\emph{reachability test register}), \\
	\indent $R_1^{(aux)}$: will store $\mathbb{Z}$ (\emph{auxiliary register}),\\
	\indent $R_2$: will store $\mathbb{Z}$ (\emph{error register}),\\
	\indent $S_1, S_2, \ldots, S_{dim}$: will store $\mathbb{Z}$ \emph{(counter registers)} .\\
	\emph{Input alphabet:}\\
	\indent $\Sigma = $ set of transitions of $\V$.\\
	\emph{Output function:}\\
		\indent
		$
		\out = 
		\begin{cases}
			0, &\text{if state is rejecting}, \\
			R_1[x:=S_1 + \ldots + S_{dim}]\cdot R_2, &\text{else}.
		\end{cases}
			$
			
		\textbf{Invariants.} Observe, that run of reset VASS can be identified with a word over an alphabet consisting of its transitions. Registers after reading a run of $\V$ of length $n$ satisfy the following: 
		\begin{enumerate}
			\item $S_1, \ldots, S_{dim} \in \mathbb{Z}$ are current coordinates of $\V$, viewed as $\mathbb{Z}$ - VASS (i.e. can go below 0),
			\item $R_2 = 0$ iff some coordinate of \V went below 0 (hence $R_2$ checks correctness of the run),
			\item $R_1[x:=S_1 + \ldots + S_{dim}] \neq 0$ iff current point is \zerovectVASS,
			\begin{enumerate}
			\item $R_1[x:=i] = 0$ iff $i!=0$ for integer $i$ from range $[0, n]$,
			\item $R_1^{(aux)} = n$.
			\end{enumerate}
			\end{enumerate}
			
			\noindent \textbf{Construction} of \modeltwoundecidable:
			\\
			\textbf{Initial values} of registers:
			\begin{enumerate}
			\item $S_1, \ldots, S_{dim} = 0$
			\item $R_2 =1$:
			\item $R_1$ :
			\begin{enumerate}
			\item $R_1 = 1$,
			\item $R_1^{(aux)} =0$.
			\end{enumerate}
			\end{enumerate}
			\textbf{Update of registers}:
			\begin{enumerate}
			\item $S_1, \ldots, S_{dim}$ are incremented, decremented or reset to 0, according to read transition,
			\item $R_2$:
			\begin{enumerate}
			\item $R_2 = R_2 \cdot (S_1'+1) \cdot \ldots \cdot (S_{dim}'+1) $, where $S_i'$ is the value of $S_i$ after the update.
			\end{enumerate}
			\item $R_1$ :
			\begin{enumerate}
			\item $R_1 = R_1 \cdot (x - R_1^{(aux)})$ (recall $x$ is a constant in $\Z[x]$),
			\item $R_1^{(aux)} = R_1^{(aux)} +1$.
			\end{enumerate}
			\end{enumerate}
			\end{proof}
\subsection{Proof of Theorem \ref{thm:modeltwo-decidable} without algebraic closure}\label{subsec:no-alg-closed}
In this section we present the proof of Theorem \ref{thm:modeltwo-decidable} without introducing the notion of algebraic closure of a field.
\begin{obs}[Representing algebraic sets -- not unique]\label{no-alg-clos:obs:representation-algebraic-sets-not-unique}
There is a subtlety regarding representing algebraic sets. Algebraic sets are \emph{not} represented by ideals of polynomials that zero on them ($I(V)$) but by ideals that they are zeros of ($I \text{ such that } V = V(I)$) -- these two notions coincide in case field is algebraically closed and radical of $I$ is taken (taking radical of $I$ does not change $V(I)$) due to Fact \ref{fact:hilbert-nullstellensatz-cor}. The former is a unique representation, but we do not choose it because of computability issues. It is not clear how to decide for a given ideal, even for a radical one, if it is of form $I(V)$ (it might be merely contained in $I(V)$ for $V=V(I)$). Therefore, it is not clear how to enumerate such ideals and, if we chose it as the representation , algebraic sets.

This subtlety is relevant to Sublemma \ref{sublm:condition-2}. Assume that $V$ was given by an ideal that is smaller than $I(V)$. Then there are two options:
\begin{itemize}
\item test for condition (2) is passed, which gives a true positive (we show that in a moment), or
\item test for condition (2) is failed, potentially giving false negative. Such event fortunately will be covered either by a true positive when enumerating $V$ (again) by $I(V)$ or by some previously enumerated true positive.
\end{itemize}
\end{obs}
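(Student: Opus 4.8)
The plan is to pin down exactly what the effective test for condition (ii) checks once $K$ is no longer assumed algebraically closed, and then to show that, although this test is only \emph{sufficient} for condition (ii), the overall semi-procedure stays sound and complete. First I would fix the enumeration: by Corollary \ref{cor:algebraic-sets-enumerate} we run through all finitely generated ideals $I$ (equivalently, through all finite sets of generators, using that $K$ is computable), each representing the algebraic set $V = V(I)$ of its $K$-rational zeros, the same $V$ arising from many different $I$. For a fixed such $I$ I would spell out the test for condition (ii): decompose $\sqrt{I} = P_1 \cap \ldots \cap P_k$ into its minimal primes (the ideal-theoretic counterpart of decomposing $V$ into irreducibles, Lemma \ref{lemma:deco-into-irreducibles}) and, for each $i$, test zeroness of $A$ viewed as a \pg over $K[X]/P_i$. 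Each $K[X]/P_i$ is computable and, being a quotient by a prime ideal, has no zero divisors, so every such test is decidable by Lemma \ref{lm:pg-over-K}. The bookkeeping fact I rely on is that zeroness over $K[X]/P_i$ holds precisely when every polynomial in the language of $A$ lies in $P_i$; hence the test for condition (ii) passes if and only if $A \subseteq \bigcap_i P_i = \sqrt{I}$.

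With this characterisation in hand the \textbf{true positive} bullet is immediate. Since $\sqrt{I} \subseteq I(V(I)) = I(V)$ always holds (if $f^m \in I$ vanishes on $V$ then so does $f$), a passing test gives $A \subseteq I(V)$, i.e.\ every polynomial produced by $A$ genuinely vanishes on $V$, so $A\restr{V} = 0$. Combined with a passing test for condition (i), which certifies $B \subseteq V$, we conclude $a(b)=0$ for all $a \in A$ and $b \in B$, that is $A(B)=0$. Thus whenever the procedure halts and declares zeroness the answer is correct, and this soundness does not depend on which representative $I$ of $V$ was used.

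The \textbf{false negative} bullet is the completeness direction, and here the failure of the Nullstellensatz over $K$ is the real point: for an under-approximating $I$ one may have $\sqrt{I} \subsetneq I(V)$, so a polynomial of $A$ can vanish on $V$ without lying in $\sqrt{I}$, and the test rejects even though condition (ii) truly holds. I would show this is harmless by exhibiting, whenever $A(B)=0$ actually holds, a single enumerated ideal for which \emph{both} tests pass. Take $I_0 = \idGenBy{A}$, the ideal generated by the language of $A$; it is finitely generated by Hilbert's Basis Theorem (Fact \ref{fact:hilbert-basis-theorem}) and therefore appears among the enumerated ideals, with no need to compute it. For $V_0 = V(I_0) = V(A)$ we have $A \subseteq I_0 \subseteq \sqrt{I_0}$, so the test for condition (ii) passes by the characterisation above; and since $A(B)=0$ means every $b \in B$ is a common zero of the language of $A$, we have $B \subseteq V(A) = V_0$, so the test for condition (i) passes as well. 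Hence the procedure halts on $I_0$. This is exactly the promised coverage: a false negative produced by some under-approximating representative of $V$ is always superseded by the true positive the procedure obtains when the same $V$ is reached through a sufficiently large ideal such as $\idGenBy{A}$ (or the maximal representative $I(V)$), or by an earlier halting true positive.

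The step I expect to be the main obstacle is the middle claim, that the coordinate-ring test detects precisely $A \subseteq \sqrt{I}$ rather than the geometric condition $A\restr{V}=0$. Over an algebraically closed field these coincide by the Nullstellensatz (Fact \ref{fact:hilbert-nullstellensatz-cor}), so no discrepancy arises; without it one must resist reading the decomposition $V = V_1 \cup \ldots \cup V_k$ as delivering the \emph{geometric} coordinate rings $K[X]/I(V_i)$ (which are not effectively available, since $I(V_i)$ cannot be recognised, cf.\ Lemma \ref{lm:irreducible-iff-coordinate-ring-no-zero-divisors}), and instead accept the algebraic domains $K[X]/P_i$ together with the one-sided inclusion $\sqrt{I} \subseteq I(V)$ that renders each individual test sound but possibly incomplete. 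Soundness of the whole procedure then comes for free from this inclusion, while completeness is recovered \emph{globally}, by the choice $I_0 = \idGenBy{A}$, rather than per representative.
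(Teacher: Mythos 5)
Your proposal is correct, and its core is the same as the paper's: the soundness (true positive) argument is identical in substance --- the paper phrases it as ``the coordinate ring of $V$ is a quotient of $K[X]/I$'' (since $I \subseteq I(V)$), while you phrase it at the ideal level as $\sqrt{I} \subseteq I(V)$, so a passing test, which you correctly characterise as $A \subseteq \sqrt{I} = P_1 \cap \ldots \cap P_k$, genuinely certifies $A\restr{V}=0$. Both formulations also agree on the crucial point that the prime decomposition of $\sqrt{I}$ (not the unavailable $I(V_i)$) is what makes each quotient a computable ring without zero divisors, so Lemma \ref{lm:pg-over-K} applies. Where you diverge, mildly but genuinely, is the completeness (false negative) bullet: the paper covers a failed test by observing that the same $V$ is eventually re-enumerated via its maximal representative $I(V)$, for which the test passes; you instead exhibit the concrete witness $I_0 = \idGenBy{A}$, finitely generated by Hilbert's Basis Theorem and hence present in the enumeration, with $A \subseteq I_0 \subseteq \sqrt{I_0}$ forcing test (ii) to pass and $A(B)=0$ forcing $B \subseteq V(I_0)$, so test (i) passes too. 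Your witness is arguably cleaner: it avoids any appeal to $I(V)$ (and thus to the very object the observation warns is not effectively recognisable, even though it does occur somewhere in the enumeration), and it directly mirrors the witness $V = V(A)$ already used in the main proof of Theorem \ref{thm:modeltwo-decidable}; the paper's version, in exchange, covers \emph{every} algebraic set $V$ satisfying (i) and (ii), not just $V(A)$, though one good witness suffices for the semi-procedure. No gaps.
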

\begin{obs}\label{no-alg-clos:obs:coordinate-ring-computable}
Let algebraic set $V$ be given by ideal $I$, i.e. $V = V(I)$. Then coordinate ring of $V$ is a quotient of $K[X]/I$.
\end{obs}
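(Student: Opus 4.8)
The plan is to make the coordinate ring of $V$ explicit as $K[X]/I(V)$ and then to use the inclusion $I \subseteq I(V)$ to present it as a quotient of $K[X]/I$. First I would recall the definition: the coordinate ring of $V$ is the ring of polynomial functions $V \to K$. Each such function is a restriction $f\restr{V}$ for some $f \in K[X]$, and two polynomials $f,g$ induce the same function on $V$ exactly when $f-g$ vanishes on $V$, i.e.\ when $f - g \in I(V)$. Thus the restriction map $f \mapsto f\restr{V}$ is a surjective ring homomorphism from $K[X]$ onto the coordinate ring of $V$ whose kernel is precisely $I(V)$, and the first isomorphism theorem identifies the coordinate ring with $K[X]/I(V)$.

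Next I would verify that $I \subseteq I(V)$, which is immediate from $V = V(I)$: every element of $I$ vanishes on its own zero set $V(I) = V$ by the very definition of $V(I)$, hence lies in $I(V)$. An inclusion of ideals $I \subseteq I(V)$ induces a well-defined surjective ring homomorphism $K[X]/I \twoheadrightarrow K[X]/I(V)$, namely reduction modulo the larger ideal. Composing with the isomorphism from the first step exhibits the coordinate ring of $V$ as a quotient of $K[X]/I$, which is what the statement asserts.

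I do not expect a genuine obstacle; the statement is a routine piece of commutative algebra. The only point to keep straight is the direction of the quotient -- enlarging the defining ideal shrinks the quotient ring -- together with the deliberate use of an \emph{arbitrary} ideal $I$ with $V = V(I)$ rather than the canonical (and, as the preceding observation notes, possibly non-computable) ideal $I(V)$. This is exactly the feature that makes the observation usable for effectiveness in the algebraic-closure-free proof of Theorem \ref{thm:modeltwo-decidable}: all computations can be carried out in the concretely presented ring $K[X]/I$, with the coordinate ring appearing as a quotient of it.
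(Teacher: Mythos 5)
Your proposal is correct and follows exactly the paper's route: the paper's proof is the one-line remark that the claim is ``straightforward from $I \subseteq I(V)$,'' and your argument is precisely that inclusion, spelled out together with the standard identification of the coordinate ring with $K[X]/I(V)$ and the induced surjection $K[X]/I \twoheadrightarrow K[X]/I(V)$. Nothing is missing; you have simply made explicit what the paper leaves implicit.
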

	\begin{proof}
	Straightforward from $I \subseteq I(V)$, which follows from definition of $I(V)$.
	\end{proof}
	Before we prove the main result of this section, let us state the following lemma.
	\begin{lemma}(\cite[Chapter 4, §8, Corollary 10 and remarks at the end of Chapter 4, §8]{CoxOShea2015})\label{lemma:prime-deco}
	Let $K$ be a computable field. Then every radical ideal $I$ effectively admits a \emph{prime decomposition}.
	\end{lemma}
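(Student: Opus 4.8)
The plan is to reduce the prime decomposition of a radical ideal $I \subseteq K[X]$ to two effective subroutines over the computable field $K$: primary decomposition and radical extraction. Since the statement is purely ideal-theoretic, I would work entirely in $K[X]$ and deliberately avoid passing to varieties. Over a non-algebraically-closed field the correspondence between prime ideals and irreducible varieties breaks down (for instance $(x^2+1) \subseteq \mathbb{R}[x]$ is prime, even maximal, yet $V(x^2+1) = \emptyset$), so one cannot simply combine the earlier decomposition Lemma \ref{lemma:deco-into-irreducibles} with the operator $I(\cdot)$; the argument must be algebraic.

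First I would invoke the Lasker--Noether theorem: since $K[X]$ is Noetherian, every ideal, in particular $I$, admits a primary decomposition $I = Q_1 \cap \cdots \cap Q_s$ with each $Q_i$ primary. Taking radicals and using that $\sqrt{\cdot}$ commutes with finite intersections gives $I = \sqrt{I} = \sqrt{Q_1} \cap \cdots \cap \sqrt{Q_s}$, where the first equality holds because $I$ is radical. As the radical of a primary ideal is prime, setting $P_i := \sqrt{Q_i}$ yields a prime decomposition $I = P_1 \cap \cdots \cap P_s$. Thus existence of the decomposition is immediate once a primary decomposition is in hand.

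The effectiveness is the crux, and I expect it to be the main obstacle. Computing a primary decomposition over a computable field is the classical procedure presented in \cite[Chapter 4]{CoxOShea2015}: it proceeds by Gröbner-basis elimination and bottoms out at factoring univariate polynomials over $K$ and over finite algebraic extensions of $K$. This factorisation is precisely the ingredient supplied by computability of $K$, together with the effective algebraic closure of Lemma \ref{fact:rabin-alg-clos-computable} wherever extensions are required. Radical extraction of each primary component $Q_i$ is likewise effective and, because $Q_i$ is primary, returns its associated prime $P_i$ directly. Throughout, Hilbert's Basis Theorem (Fact \ref{fact:hilbert-basis-theorem}) guarantees that all intermediate ideals remain finitely presented, so each step maps a finite generating set to a finite generating set.

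Finally I would remove redundancy to obtain an irredundant decomposition: for each ordered pair $(i,j)$ I test whether $P_i \subseteq P_j$ by checking, via a Gröbner basis, that every generator of $P_i$ lies in $P_j$; whenever $P_i \subsetneq P_j$ the factor $P_j$ is redundant in the intersection and may be discarded. What remains is the minimal-prime decomposition, which for the radical ideal $I$ coincides with its associated-prime decomposition since $I$ has no embedded primes. The difficulty is concentrated entirely in the primary-decomposition and radical steps: a base field that is computable but not algebraically closed forces the univariate factorisations to be performed over a tower of effective field extensions rather than over $K$ alone, and it is the effectivity of that factorisation tower that makes the whole procedure terminate.
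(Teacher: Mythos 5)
Your reconstruction is essentially the paper's own route: the paper offers no in-text proof of this lemma, citing it directly from \cite{CoxOShea2015} (Chapter 4, \S 8), and what you assemble --- Lasker--Noether primary decomposition, radicals commuting with finite intersections, $\sqrt{Q_i}$ prime, pruning to the minimal primes of the radical ideal, with effectiveness delegated to Gr\"obner-basis primary-decomposition algorithms that bottom out in polynomial factorisation --- is precisely the content of the cited Corollary 10 and the algorithmic remarks closing that section, and your observation that one must stay ideal-theoretic rather than pass through varieties is exactly why the paper invokes this lemma (instead of Lemma \ref{lemma:deco-into-irreducibles}) in its algebraic-closure-free argument. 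One caveat, which afflicts the paper's statement equally and is harmless for the fields actually used there (all finitely generated over $\mathbb{Q}$): computability of $K$ alone does \emph{not} supply decidable polynomial factorisation (by Fr\"ohlich--Shepherdson there are computable fields with undecidable irreducibility), and Lemma \ref{fact:rabin-alg-clos-computable} cannot repair this, since deciding which products of linear factors over $\overline{K}$ have coefficients in $K$ requires recognising membership in the image of $K$, which is equivalent to already having a splitting algorithm for $K$.
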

Now we are ready to present the main proof.
\begin{proof}[Proof of first bullet point]
Let us go through the algorithm for Sublemma \ref{sublm:condition-2} and see if indeed positives are true, even for $V$ being represented by radical ideal $I$ contained in $I(V)$.
If $A$ is zero, treated as a \pg over $K[X]/I$, the more it is when zero treated as a \pg over coordinate ring (Observation \ref{no-alg-clos:obs:coordinate-ring-computable}). Is crucial that when decomposing $V$ into irreducibles, we use algorithm for prime decomposition (Lemma \ref{lemma:prime-deco}) of the ideal -- then this ring has no zero divisors (Lemma \ref{lm:irreducible-iff-coordinate-ring-no-zero-divisors}) and hence decidability follows from Lemma \ref{lm:pg-over-K}.
\end{proof}
\subsection{Proof of Lemma \ref{lm:com-inj-induce-automorphisms}} \label{subsec:proof-thm:modelone-decidable}
In this section we give the proof of Lemma \ref{lm:com-inj-induce-automorphisms} used in the proof of Theorem \ref{thm:modelone-decidable}.
\lmComInjInduceAutomorphisms*
\begin{lemma}\label{lm:com-inj-(i)-and-(ii)}
	Let $p : \Sigma \ra \Sigma^*$ be a com-injective word substitution. The following two linear mappings of $|\Sigma|$-dimensional linear spaces are invertible:
	\begin{enumerate}[(i)]
		\item $\wt w \mapsto \wt{p(w)}$ of linear space of formal linear combinations of variables $\wt\Sigma$ with coefficients in the field $\Q(\w\Sigma)$,
		\item $w \mapsto p(w)$ treated as a mapping of commutative words, extended to \Q-linear space $\Q^{\Sigma}$
	\end{enumerate}
\end{lemma}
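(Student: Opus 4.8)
The plan is to prove part (ii) first, since it is the combinatorial heart of the statement, and then obtain part (i) from it by a specialisation argument.

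For part (ii), I would identify commutative words over $\Sigma$ with their Parikh vectors in $\mathbb{N}^\Sigma$. Because $p$ is a monoid homomorphism $\Sigma^*\ra\Sigma^*$, passing to commutative images makes $w\mapsto p(w)$ an additive map $\mathbb{N}^\Sigma\ra\mathbb{N}^\Sigma$, namely multiplication by the matrix $M$ whose $a$-th column is the Parikh vector of $p(a)$; explicitly $M_{b,a}$ is the number of occurrences of $b$ in $p(a)$. Com-injectivity (Definition \ref{df:com-injective}) says precisely that $M$ is injective on $\mathbb{N}^\Sigma$, and I must upgrade this to invertibility of its $\Q$-linear extension on $\Q^\Sigma$, i.e.\ to $\det M\neq 0$. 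The step I expect to be the crux is this passage from $\mathbb{N}$ to $\Q$: if some nonzero $u\in\Q^\Sigma$ satisfied $Mu=0$, then clearing denominators gives a nonzero integer vector, and splitting it into its nonnegative and nonpositive parts $u^{+},u^{-}\in\mathbb{N}^\Sigma$ (which have disjoint support) yields $Mu^{+}=Mu^{-}$ with $u^{+}\neq u^{-}$, contradicting injectivity on $\mathbb{N}^\Sigma$. Hence $\det M\neq 0$ and the extension is invertible.

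For part (i), I would reduce to part (ii). Writing $p(a)=c_1\cdots c_m$, the encoding of Definition \ref{df:str-to-pol-encoding} gives $\wt{p(a)}=\sum_{j=1}^{m}\w{c_1}\cdots\w{c_{j-1}}\,\wt{c_j}$ (empty product $=1$), so the map of (i) is the $\Q(\w\Sigma)$-linear endomorphism of the $|\Sigma|$-dimensional space with basis $\{\wt a\}$ whose matrix $N$ has entries $N_{b,a}=\sum_{j:\,c_j=b}\w{c_1}\cdots\w{c_{j-1}}\in\Q[\w\Sigma]$. I would then specialise by substituting $\w c:=1$ for every $c\in\Sigma$: each monomial collapses to $1$, so $N_{b,a}$ becomes the number of occurrences of $b$ in $p(a)$, i.e.\ $N|_{\w=1}=M$. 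Since the determinant commutes with the evaluation ring homomorphism, $(\det N)|_{\w=1}=\det M\neq 0$ by part (ii); a polynomial that is nonzero at a point is not the zero polynomial, so $\det N\neq 0$ in $\Q[\w\Sigma]$, and therefore $N$ is invertible over the fraction field $\Q(\w\Sigma)$.

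The logical dependency is thus that (ii) is proved on its own and its conclusion $\det M\neq 0$ is fed into (i) through the evaluation $\w\mapsto 1$. Apart from the positive/negative decomposition in (ii), the argument is bookkeeping with the encoding together with the elementary fact that evaluation of polynomials commutes with taking determinants; throughout, ``invertible'' is read as ``bijective'', which for these square matrices over a field is equivalent to the determinant being nonzero.
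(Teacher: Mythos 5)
Your proposal is correct and takes essentially the same route as the paper: part (ii) is established first, and part (i) is reduced to it by substituting $1$ for every variable in $\w\Sigma$, observing that this evaluation commutes with taking determinants, so the nonzero number $\det M$ forces $\det N$ to be a nonzero polynomial and hence invertible over $\Q(\w\Sigma)$. The only difference is that you spell out the upgrade from injectivity on $\mathbb{N}^\Sigma$ to invertibility of the $\Q$-linear extension (clearing denominators and splitting into nonnegative parts with disjoint support), a step the paper's proof compresses into ``(ii) is straightforward from definition of com-injectivity''.
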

We illustrate the statement of above lemma with the following example.
\begin{ex}
	Let $\Sigma = \{a,b\}$ and $p: a \mapsto ab, b \mapsto babb$.
	\\
	Values of $\QSigF$-linear mapping $p$ in basis $\wt\Sigma$ are:
	\begin{align*}
		p(\wt a) &= \wt{ab} = \wt a \w b + \wt b, \\
		p(\wt b) &= \wt{babb} = \wt b \w {a}{\w b}^2 + \wt a {\w b}^2 + \wt b \w b + \wt b .
	\end{align*}
	Values of $\Q$-linear mapping $p$ on basis $\Sigma$ (coefficients are in exponents) are:
	\begin{align*}
		p( a) &= ab = a^1 b^1, \\
		p(b) &= babb = a^1b^3,
	\end{align*}
	(recall in (ii) outputted words are considered commutative) hence the corresponding matrices are 
	$
	\begin{bmatrix}
		\w b & {\w b}^2 \\
		1 & \w {a} {\w b}^2 + \w b + 1
	\end{bmatrix}
	$ and 
	$
	\begin{bmatrix}
		1 & 1 \\
		1 & 3
	\end{bmatrix}.
	$
	They are invertible: determinants are equal to, respectively, $\w a {\w b}^3 + \w b$ and 2. 
\end{ex}
\begin{proof}[Proof of Lemma \ref{lm:com-inj-(i)-and-(ii)}]
	(ii) is straightforward from definition of com-injectivity. Observe that mapping from (ii) is the same as mapping from (i) when 1 is substituted for variables from $\w\Sigma$; the same holds for their determinants, and hence if determinant for (ii) is a non-zero number, determinant for (i) is a non-zero polynomial.
\end{proof}
No we proceed towards the proof of Lemma \ref{lm:com-inj-induce-automorphisms}.
By \emph{explicit form} of a homomorphism we mean a substitution that defines it; a homomorphism given in this form is said to be \emph{given explicitly}.
We write systems of equations as (name of equation: equation); we define subsystems by giving equation names.
\begin{proof} [Proof of Lemma \ref{lm:com-inj-induce-automorphisms}]
Assume substitution $p$ is com-injective.
	To find the inverse substitution (if exists) we consider system of equations $S = \{\wt \sigma: \wt{\sigma}' = \wt{p(\sigma)}\} \cup \{\w \sigma: \w{\sigma}' = \w{p(\sigma)}\}$.
\w\Sigma-subsystem of $S$ is purely in variables \w\Sigma hence we solve it first. Right hand sides are monomials and it can be seen as system of linear equations, where exponents play the role of coefficients from field \Q; we solve it (solution exists -- its existence is equivalent to (ii) of Lemma \ref{lm:com-inj-(i)-and-(ii)}). This solution gives explicitly an automorphism of $K(\w a^{\Q}, a\in \Sigma)$ which naturally is also an automorphism of \KQzeroF.
We substitute this to \wt\Sigma-subsystem and obtain a system of linear equations in variables \wt\Sigma' and coefficients in \KQzeroF; this system is automorphic to \wt\Sigma-subsystem of $S$ and hence has a solution (due to (i) of Lemma \ref{lm:com-inj-(i)-and-(ii)}).
Its solution, together with solution of \w\Sigma-subsystem, gives the right-inverse substitution explicitly. It is a two-sided inverse because both systems of equations were linear (in the mentioned sense).

The proof of the converse is analogous, which finishes the proof.
	\end{proof}
	
\subsection{Proof of Lemma \ref{lm:coordinate-ring-computable-from-V}}\label{subsec:computation-rings-and-ideals}
In this section we prove Lemma \ref{lm:coordinate-ring-computable-from-V} used in the proof of Theorem \ref{thm:modeltwo-decidable}. It is a well known result (e.g. implicit in \cite[Chapter 5, §2 and §3 and §4]{CoxOShea2015}), but we did not find an explicit reference.

\partofsec{Preliminaries.}
By $K[X]$ we denote over some field $K$ and a finite set of variables $X$.

Every ideal $I$ in ring $K[X]$ induces a congruence $\simI$ defied as $f \simI g$ iff $f-g \in I$, for $f, g \in K[X]$. By $K[X]/I$ we denote \emph{quotient ring} $K[X]/\!\!\simI$. For $f \in K[X]$.
A \emph{radical of an ideal} $I$ is the set $\sqrt{I} = \{f \in K[X] \mid f^n \in I \text{ for some } n \geq 0\}$; it is an ideal. An ideal $I$ is \emph{radical} if $f^n \in I$ implies $f \in I$ for all polynomials $f$ and $n \geq 0$, or in other word, $I$ is equal to its radical. 
Let $V$ be an algebraic set. A \emph{coordinate ring} of $V$ is the ring of polynomial functions from $V$ to $K$. 
A ring is called a \emph{computable ring} if its elements can be enumerated in a way that ring operations are computable functions.
\begin{fact}(\cite[Remarks at the end of §2 of Chapter 4]{CoxOShea2015} )\label{fact:radical-of-ideal-computable}
	Given ideal $I$ in $K[X]$, one can compute its radical.
\end{fact}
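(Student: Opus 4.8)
The plan is to reduce the computation of $\sqrt{I}$ to Gröbner basis manipulations over a computable field, with the heart of the matter being the zero-dimensional case, which I would handle by Seidenberg's lemma. Throughout I assume $K$ is a computable field of characteristic $0$ (which covers every field used in this paper); this assumption enters only to make squarefree parts of univariate polynomials computable, via $g \mapsto g/\gcd(g,g')$ by the Euclidean algorithm. All the classical effective subroutines are available and I would treat them as black boxes: computing a Gröbner basis of $\langle F\rangle$ for a finite generating set $F$ of $I$, computing elimination ideals $I\cap K[x_i]$ (lexicographic order), ideal quotients and saturations $(J:h^{\infty})$, intersections of ideals, the dimension $\dim V(I)$, and a maximal subset of variables algebraically independent modulo $I$.

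First I would dispatch the zero-dimensional case. One detects from a Gröbner basis whether $K[X]/I$ is a finite-dimensional $K$-vector space (equivalently, whether some pure power $x_i^{m_i}$ is a leading term for every $i$). In that case each elimination ideal $I\cap K[x_i]$ is a nonzero principal ideal $\langle g_i\rangle$ of the PID $K[x_i]$, with $g_i$ computable. Taking squarefree parts $g_i^{\mathrm{red}}=g_i/\gcd(g_i,g_i')$, Seidenberg's lemma (valid since characteristic $0$ fields are perfect) gives
$$
\sqrt{I}=I+\langle g_1^{\mathrm{red}},\ldots,g_n^{\mathrm{red}}\rangle,
$$
a finite, computable generating set, which finishes this case.

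Next I would reduce the general case to the zero-dimensional one by descending on dimension. Let $d=\dim V(I)$ and choose a maximal set $u$ of $d$ variables independent modulo $I$, writing $y$ for the remaining $n-d$ variables. Extending scalars to the computable field of rational functions $K(u)$ makes the extended ideal $I^{e}\subseteq K(u)[y]$ zero-dimensional, so $\sqrt{I^{e}}$ is computable by the previous paragraph (over the perfect computable field $K(u)$). Clearing denominators and saturating, the contraction $\sqrt{I^{e}}\cap K[X]$ is obtained effectively as a saturation $(J:h^{\infty})$ for a suitable $h\in K[u]$, and it equals the intersection of exactly the $d$-dimensional (top) associated primes of $I$. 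Writing $J_0$ for this ideal, one has $\sqrt{I}=J_0\cap\sqrt{I+\langle h^{N}\rangle}$ for a suitable $N$, where $I+\langle h^{N}\rangle$ has dimension strictly less than $d$; recursing on it and intersecting the two outputs (intersection being effective via Gröbner bases) produces $\sqrt{I}$. I deliberately avoid routing through the paper's prime-decomposition results (Lemma \ref{lemma:prime-deco}, Lemma \ref{lemma:deco-into-irreducibles}) to sidestep any circularity, since those take radical ideals as input.

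The main obstacle I expect is precisely this dimension-reducing recursion. One must verify that the contraction $\sqrt{I^{e}}\cap K[X]$ captures all and only the top-dimensional components, that the saturating polynomial $h$ can be chosen so that the residual ideal $I+\langle h^{N}\rangle$ strictly drops in dimension (ensuring termination), and that no lower-dimensional component is lost along the way. These equidimensionality and localization facts are the substantive content established in \cite{CoxOShea2015}; by comparison, the zero-dimensional Seidenberg step and the effectivity of the individual subroutines are routine.
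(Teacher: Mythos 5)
The paper never proves this Fact: it is imported wholesale from \cite{CoxOShea2015} as a black box, so your proposal can only be compared against the standard algorithm that citation points to. What you sketch is essentially that algorithm (Gianni--Trager--Zacharias): Seidenberg's lemma in dimension zero, then extension to $K(u)[y]$ for a maximal independent set $u$ of variables, contraction via saturation, and recursion. The skeleton is the right one, but two of your intermediate claims are false, and one of them carries your entire termination argument.

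Concretely: the contraction $\sqrt{I^{e}}\cap K[X]$ is the intersection of those minimal primes $P$ of $I$ with $P\cap K[u]=\{0\}$; every such $P$ is indeed $d$-dimensional, but not every $d$-dimensional minimal prime avoids $K[u]$, so the contraction does \emph{not} capture all top components. Take $I=\langle xy\rangle\subseteq K[x,y]$, $d=1$, $u=\{x\}$: the minimal primes $\langle x\rangle$ and $\langle y\rangle$ are both top-dimensional, yet the contraction is only $\langle y\rangle$; the saturating polynomial is $h=x$ (the leading coefficient of $xy$ viewed in $K[x][y]$), and the residual ideal $I+\langle h\rangle=\langle x\rangle$ still has dimension $1=d$. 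So ``descending on dimension'' fails, and as written your recursion has no termination proof. The standard repair: independence of $u$ gives $I\cap K[u]=\{0\}$, hence $\sqrt{I}\cap K[u]=\{0\}$ since $K[u]$ is a domain, so $h\notin\sqrt{I}$ while $h\in\sqrt{I+\langle h^{N}\rangle}$; each recursive call therefore strictly enlarges the radical, and the recursion terminates by the ascending chain condition in $K[X]$ --- with no per-step dimension bound. Separately, your blanket characteristic-$0$ hypothesis does not cover the paper's actual uses: Theorem~\ref{thm:modeltwo-decidable} allows an arbitrary computable ring with no zero divisors (e.g.\ $\mathbb{F}_p[t]$), and this Fact is invoked, through Lemma~\ref{lm:coordinate-ring-computable-from-V}, over the algebraic closure of the fraction field, which may have characteristic $p$. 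That, too, is repairable --- a computable algebraically closed field is perfect and has computable $p$-th roots (enumerate and test), so squarefree parts and Seidenberg's lemma survive --- but it must be said, since over an arbitrary computable field of characteristic $p$ squarefree parts are not computable in general.
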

\begin{fact}[\cite{CoxOShea2015}, Chapter 4, §2, Theorem 7 (iii)]\label{fact:hilbert-nullstellensatz-cor}
	Let $K$ be an algebraically closed field. Then mappings $V \mapsto I(V), I \mapsto V(I)$ form a bijective correspondence between algebraic sets and radical ideals.
\end{fact}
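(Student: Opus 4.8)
The plan is to reduce the claimed bijective correspondence to the strong form of Hilbert's Nullstellensatz, the identity $I(V(I)) = \sqrt{I}$ valid over an algebraically closed field, and then to verify the remaining directions, which are purely formal and hold over an arbitrary field. The strategy is to show that the two maps are mutually inverse when their domains are restricted to algebraic sets and to radical ideals respectively.

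First I would record the three elementary facts. (1) For any algebraic set $V$, the ideal $I(V)$ is radical: if $f^n$ vanishes on $V$ then so does $f$, since $K$ has no zero divisors, so $I(V)$ equals its own radical; thus $V \mapsto I(V)$ does land in radical ideals. (2) For any ideal $I$, the set $V(I)$ is by definition algebraic, so $I \mapsto V(I)$ lands in algebraic sets. (3) For any algebraic set $V$ one has $V(I(V)) = V$: writing $V = V(J)$ for some ideal $J$ witnessing that $V$ is algebraic, we have $J \subseteq I(V)$ and hence $V = V(J) \supseteq V(I(V))$, while the reverse inclusion $V \subseteq V(I(V))$ is immediate from the definitions. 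Facts (1)--(3) already show that $I \mapsto V(I)$ is surjective onto algebraic sets and that $V \mapsto I(V)$ is injective.

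The only remaining point is the composition in the other order: for a \emph{radical} ideal $I$ one needs $I(V(I)) = I$. This is exactly where algebraic closure enters. I would invoke the strong Nullstellensatz to get $I(V(I)) = \sqrt{I}$; since $I$ is radical, $\sqrt{I} = I$, which closes the argument. Combined with (3), the two maps are then mutually inverse, establishing the bijective correspondence between algebraic sets and radical ideals.

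The main obstacle is therefore the strong Nullstellensatz itself; everything else is formal. I would obtain it from the weak Nullstellensatz (every proper ideal of $K[X]$ over algebraically closed $K$ has a nonempty zero set) via the Rabinowitsch trick: to show that a polynomial $f$ vanishing on $V(I)$ lies in $\sqrt{I}$, adjoin a fresh variable $t$ and pass to the ideal generated by $I$ together with $1 - tf$ in $K[X, t]$; this ideal has empty zero set, hence equals the whole ring by the weak form, and clearing denominators after formally setting $t = 1/f$ exhibits a power of $f$ inside $I$. The weak Nullstellensatz in turn rests on Zariski's lemma (equivalently Noether normalization): a field that is finitely generated as a $K$-algebra is a finite extension of $K$, so over algebraically closed $K$ it equals $K$, which forces every maximal ideal to have the form $\idGenBy{x_1 - a_1, \ldots, x_n - a_n}$ and hence a nonempty zero locus. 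These steps genuinely use that $K$ is algebraically closed, whereas the elementary facts (1)--(3) do not; this is precisely why the correspondence fails for non-closed fields and why the restriction to algebraically closed $K$ appears in the statement.
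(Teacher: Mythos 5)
Your proposal is correct, and it is precisely the standard argument: the purely formal verifications (that $I(V)$ is radical, that $V(I(V))=V$, and the resulting injectivity/surjectivity), plus the strong Nullstellensatz $I(V(I))=\sqrt{I}$ obtained from the weak form by the Rabinowitsch trick, with the weak form resting on Zariski's lemma. The paper gives no proof of its own here --- it imports the fact verbatim from Cox--O'Shea, Chapter 4, \S 2, Theorem 7(iii), and the proof in that reference follows exactly the weak-Nullstellensatz-plus-Rabinowitsch route you describe, so your argument coincides with the cited source.
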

\partofsec{Representation.}
For effectiveness of results, we assume that $K$ is a \emph{computable field}, i.e. its elements can be enumerated such that field operations are computable functions. Then we represent objects used by algorithms as follows:
\begin{itemize}
	\item ideals are represented by finite sets of generators; every ideal admits such a representation due to Hilbert's Basis Theorem (Fact \ref{fact:hilbert-basis-theorem}), and
	\item \emph{algebraic sets} are represented by ideals that they are zeroes of, i.e. an algebraic set $V$ is represented by $I$ such that $V = V(I)$.
\end{itemize}
\partofsec{Proof of Lemma \ref{lm:coordinate-ring-computable-from-V}.}
\begin{fact}\label{fact:I(V)-computable}
	Let $K$ be a computable, algebraically closed field. Given algebraic set $V$, one can compute $I(V)$.
\end{fact}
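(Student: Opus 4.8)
The plan is to identify $I(V)$ with the radical of the ideal representing $V$ and then invoke the effective radical computation. Recall from the representation convention stated above that an algebraic set $V$ is given to us by a finite set of generators of an ideal $I$ with $V = V(I)$. So the whole task reduces to computing $I(V(I))$ from $I$.

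First I would observe that, because $K$ is algebraically closed, the Nullstellensatz (Fact \ref{fact:hilbert-nullstellensatz-cor}) applies. That fact states that $V \mapsto I(V)$ and $I \mapsto V(I)$ are mutually inverse bijections between algebraic sets and radical ideals. In particular, for any \emph{radical} ideal $J$ we have $I(V(J)) = J$. For a possibly non-radical ideal $I$, I would use the standard identity $V(I) = V(\sqrt{I})$ (a polynomial vanishes on $V(I)$ together with all its powers, so passing to the radical does not change the zero set), and the fact that $\sqrt{I}$ is radical. Combining these,
\[
I(V) = I(V(I)) = I\bigl(V(\sqrt{I})\bigr) = \sqrt{I}.
\]
This is the crux of the argument: over an algebraically closed field the ideal of functions vanishing on $V$ is exactly the radical of any ideal representing $V$.

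Having reduced the problem to computing $\sqrt{I}$, I would finish by appealing to Fact \ref{fact:radical-of-ideal-computable}, which says that given an ideal $I$ in $K[X]$ (via generators) over a computable field $K$, one can compute a finite set of generators of $\sqrt{I}$. This yields a finite generating set of $I(V)$, completing the computation.

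I do not expect a genuine obstacle here; the statement is essentially the effective content of the Nullstellensatz. The one point that genuinely requires attention — and the only place the hypotheses are used — is the appeal to algebraic closure: without it the equality $I(V(I)) = \sqrt{I}$ fails (for instance $I = (x^2+1)$ over $\mathbb{R}$ has empty zero set, so $I(V(I))$ is the whole ring while $\sqrt{I} = I$), and the identification of $I(V)$ with a computable radical would break down. Computability of $K$ is needed precisely so that Fact \ref{fact:radical-of-ideal-computable} can be applied to produce the generators effectively.
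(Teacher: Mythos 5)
Your proposal is correct and follows essentially the same route as the paper: identify $I(V)$ with $\sqrt{I}$ via the Nullstellensatz correspondence (Fact \ref{fact:hilbert-nullstellensatz-cor}) and then compute the radical by Fact \ref{fact:radical-of-ideal-computable}. The only difference is that you spell out the intermediate step $I(V(I)) = I(V(\sqrt{I})) = \sqrt{I}$, which the paper leaves implicit.
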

\begin{proof}
	 Let $V$ be represented by ideal $I$. Then $I(V)$ is the radical of $I$ (Fact \ref{fact:hilbert-nullstellensatz-cor}); it can be computed (Fact \ref{fact:radical-of-ideal-computable}).
\end{proof}
\begin{fact}(\cite[Chapter 5, §3, Proposition 5]{CoxOShea2015})\label{fact:quotient-rings-are-computable-rings}
	Quotient rings are computable rings. More precisely, given ideal $I$, ring $K[X]/I$ is a computable ring.
\end{fact}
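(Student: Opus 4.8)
The plan is to realize $K[X]/I$ concretely by choosing a canonical representative in each coset and checking that the ring operations on these representatives are computable. The whole construction rests on the theory of Gröbner bases, which is effective precisely because $K$ is a computable field and because $I$ is supplied by a finite generating set (by Hilbert's Basis Theorem, Fact \ref{fact:hilbert-basis-theorem}).

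First I would fix a monomial order on $K[X]$, for instance degree-lexicographic, which is a well-order. Starting from the given generators of $I$, I would run Buchberger's algorithm to obtain a reduced Gröbner basis $G = \{g_1, \ldots, g_m\}$ of $I$. Each step only forms $S$-polynomials and reduces them by the multivariate division algorithm, using the field arithmetic of $K$ together with equality-testing against $0$; all of these are available because $K$ is computable. Termination follows from the ascending chain condition in $K[X]$, i.e.\ from Fact \ref{fact:hilbert-basis-theorem}.

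Next I would define the canonical representative of a coset $f + I$ to be the remainder $\overline{f}$ of $f$ on division by $G$. The characteristic property of a Gröbner basis is that this remainder is uniquely determined, independently of the order in which the $g_i$ are used, and that $f \simI h$ holds if and only if $\overline{f} = \overline{h}$. Thus cosets correspond bijectively to the normal forms, which are exactly the (finite) $K$-linear combinations of the standard monomials, namely the monomials divisible by no leading monomial of $G$. Since $K$ is a computable field and the standard monomials can be enumerated, these normal forms can be enumerated; this yields the required enumeration of $K[X]/I$ and, incidentally, decidability of equality.

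Finally I would confirm computability of the operations. The sum of two normal forms is again a $K$-combination of standard monomials, hence already a normal form, computed by adding coefficients in $K$. For a product I would multiply the two representatives as ordinary polynomials and then reduce modulo $G$ by the division algorithm to obtain the normal form. Both routines terminate and use only computable field operations, so $+$ and $\cdot$ act computably on the chosen enumeration, establishing that $K[X]/I$ is a computable ring. I expect the crux to be exactly the Gröbner-basis machinery: arguing that Buchberger's algorithm terminates and that the division remainder is a genuine canonical form, so that congruence modulo $I$ is decided by equality of normal forms. Everything else reduces to routine arithmetic in the computable field $K$.
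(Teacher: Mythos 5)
Your proof is correct and follows essentially the same route as the paper's cited source: the paper proves this Fact simply by citing \cite[Chapter 5, \S 3, Proposition 5]{CoxOShea2015}, whose content is exactly your argument --- compute a Gr\"obner basis by Buchberger's algorithm and represent cosets by unique normal forms (remainders on division), with standard monomials spanning the quotient. The only inessential difference is that a reduced Gr\"obner basis is not strictly needed, since any Gr\"obner basis already yields unique remainders.
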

\begin{fact}(\cite[Chapter 5, §2, remark about Theorem 7 after Definition 8]{CoxOShea2015})\label{fact:coord-ring-iso-to-KX/I}]
	For an algebraic set $V$, its coordinate ring is isomorphic to $K[X]/I(V)$.
\end{fact}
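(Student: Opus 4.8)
The plan is to exhibit a single explicit surjective ring homomorphism from $K[X]$ onto the coordinate ring of $V$ whose kernel is exactly $I(V)$, and then to read off the isomorphism from the first isomorphism theorem for commutative rings. This reduces the whole statement to identifying a kernel, which is essentially definitional.

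First I would define the \emph{restriction map} $\rho : K[X] \to K[V]$, where $K[V]$ denotes the coordinate ring of $V$, by sending a polynomial $p$ to the polynomial function $p\restr{V}$. Since restriction of functions commutes with pointwise addition and multiplication and sends the constant polynomial $1$ to the constant function $1$, the map $\rho$ is a ring homomorphism. I would then check the two properties needed: surjectivity and the computation of the kernel. Surjectivity is immediate from the chosen definition of the coordinate ring, namely the ring of polynomial functions $V \to K$: every such function is by definition the restriction to $V$ of some element of $K[X]$, so $\rho$ hits each element of $K[V]$.

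For the kernel, I would observe that $\rho(p) = 0$ means $p\restr{V}$ is the zero function, i.e.\ $p(v)=0$ for all $v \in V$, which is precisely the defining condition $p \in I(V)$. Hence $\ker\rho = I(V)$, and the first isomorphism theorem yields
$$
K[V] \;=\; \operatorname{im}\rho \;\cong\; K[X]/\ker\rho \;=\; K[X]/I(V).
$$

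I do not expect a genuine obstacle here; the argument is entirely formal once the coordinate ring is taken to be the ring of polynomial functions on $V$ rather than an abstractly defined ring of regular functions. The only point deserving a word of care is surjectivity, which is the step that could fail for a coarser notion of ``function on $V$'' but holds by construction for polynomial functions. It is worth stressing that the statement needs no hypothesis that $I(V)$ be radical nor that $K$ be algebraically closed, because $I(V)$ is defined as the \emph{full} vanishing ideal and therefore automatically coincides with $\ker\rho$; the subtleties about radicals and the Nullstellensatz enter only when one wants to recover $I(V)$ from a given set of defining generators, not in this isomorphism itself.
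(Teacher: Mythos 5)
Your proof is correct and is essentially the argument behind the cited fact: the paper itself gives no proof (it only cites \cite[Chapter 5, §2]{CoxOShea2015}), and the standard proof there is exactly your restriction-map-plus-first-isomorphism-theorem argument, using that the paper defines the coordinate ring concretely as the ring of polynomial functions $V \to K$. Your closing remark is also apt: no algebraic closure or radicality is needed here, since those hypotheses only enter (via the Nullstellensatz, Fact \ref{fact:hilbert-nullstellensatz-cor}) when one must compute $I(V)$ from a representing ideal, as in Fact \ref{fact:I(V)-computable}.
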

\begin{lemma}\label{lm:coordinate-ring-computable-from-V}
	Let $K$ be an algebraically closed field. Then, given algebraic set $V$, one can compute its coordinate ring and it is a computable ring.
\end{lemma}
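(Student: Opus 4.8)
The plan is to assemble the three auxiliary facts stated immediately above into a single effective construction, so the proof becomes essentially bookkeeping around one substantive step. Throughout I assume, consistently with the standing convention of this section, that $K$ is not only algebraically closed but also computable, since this is what the cited effectiveness results require. First I would invoke Fact \ref{fact:coord-ring-iso-to-KX/I}, which tells us that the coordinate ring of $V$ is isomorphic (as a ring) to the quotient $K[X]/I(V)$. Hence it suffices to produce a concrete presentation of this quotient and to verify that it is a computable ring; the isomorphism then transports that structure back to the coordinate ring itself.

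The crucial step is making the ideal $I(V)$ available effectively. Recall that an algebraic set is represented by \emph{some} ideal $I$ with $V = V(I)$, and this $I$ may be strictly smaller than $I(V)$, so $I(V)$ must actually be computed rather than read off. This is where algebraic closure enters and is the point I expect to be the main obstacle: because $K$ is algebraically closed, the Nullstellensatz (Fact \ref{fact:hilbert-nullstellensatz-cor}) identifies $I(V) = I(V(I))$ with the radical $\sqrt{I}$, and radicals are computable (Fact \ref{fact:radical-of-ideal-computable}). For a field that is not algebraically closed this identification fails and $I(V)$ need not be effectively accessible at all (compare Observation \ref{no-alg-clos:obs:representation-algebraic-sets-not-unique}), which is precisely why the algebraically closed setting is singled out here. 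In the write-up I would simply cite Fact \ref{fact:I(V)-computable}, which packages exactly this radical computation and yields a finite generating set of $I(V)$ from the given generators of $I$.

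Finally I would feed the computed ideal $I(V)$ into Fact \ref{fact:quotient-rings-are-computable-rings}, which guarantees that $K[X]/I(V)$ is a computable ring, i.e. comes equipped with an enumeration of its elements and with algorithms for $+$ and $\cdot$. Carrying this computable structure across the isomorphism of Fact \ref{fact:coord-ring-iso-to-KX/I} exhibits the coordinate ring of $V$ as a computable ring, which is exactly the claim. The only genuine mathematical content is the radical (equivalently, Nullstellensatz) step; the remaining two invocations are routine assembly of the cited facts.
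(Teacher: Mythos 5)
Your proof is correct and follows exactly the paper's own route: cite Fact \ref{fact:coord-ring-iso-to-KX/I} for the isomorphism with $K[X]/I(V)$, compute $I(V)$ via Fact \ref{fact:I(V)-computable} (i.e.\ the Nullstellensatz plus radical computation), and conclude computability from Fact \ref{fact:quotient-rings-are-computable-rings}. Your explicit remark that $K$ must also be assumed computable is a welcome clarification of a hypothesis the paper's lemma statement leaves implicit.
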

\begin{proof}
	Coordinate ring is isomorphic to $K[X]/I(V)$ (Fact \ref{fact:coord-ring-iso-to-KX/I}). It can be computed from $V$, as we can compute $I(V)$ (Fact \ref{fact:I(V)-computable}). It is a computable ring (Fact \ref{fact:quotient-rings-are-computable-rings}).
\end{proof}
%

\end{document}